\newtheorem{theorem}{Theorem}
\newtheorem{lemma}[theorem]{Lemma}
\newtheorem{corollary}[theorem]{Corollary}
\newtheorem{proposition}[theorem]{Proposition}
\newtheorem{claim}[theorem]{Claim}
\newtheorem{conjecture}[theorem]{Conjecture}
\newtheorem{definition}{Definition}
\newtheorem{fact}[theorem]{Fact}
\theoremstyle{definition}
\newtheorem{remark}[theorem]{Remark}
\newcommand{\ignore}[1]{}
\newcommand\mbR{\mbox{$\mathbb{R}$}}
\newcommand\Q{\mbox{$\mathbb{Q}$}}
\newcommand\D{\mbox{\sf {D}}\xspace}
\newcommand\dcc{\mbox{$\sf {D^{CC}}$}\xspace}
\newcommand\gf{\mbox{$\mathbb{F}_2$}\xspace}
\newcommand\B{\{0,1\}}     
\newcommand\pmB{\{+1,-1\}}     
\newcommand\Bn{\{0,1\}^n}
\newcommand\BntB{\{0,1\}^n\rightarrow \{0,1\}}
\newcommand {\ie} {\textit{i.e.}\xspace}
\newcommand {\etal} {\textit{et al.}\xspace}
\newcommand\defeq{\stackrel{\mathrm{\scriptsize def}}{=}}
\newcommand\av{\mbox{\bf{\bf E}}}
\newcommand\alice{\mbox{\sf {Alice}}\xspace}
\newcommand\bob{\mbox{\sf {Bob}}\xspace}
\newcommand\rank{\mbox{\sf {rank}}\xspace}
\newcommand\gran{\mbox{\sf {gran}}\xspace}
\newcommand\spar{\mbox{\sf {spar}}\xspace}
\newcommand\codim{\mbox{\sf {co-dim}}\xspace}
\newcommand\fn[2]{\| \hat{#1} \|_#2}
\newcommand\wfn[2]{\| \widehat{#1} \|_#2}
\newcommand\lrank{\mbox{\sf {lin-rank}\xspace}}
\newcommand{\cube}[1]{{\operatorname\{0, 1\}^{#1}}}
\newcommand{\eqdef}{{\stackrel{\rm def}{=}}}
\newcommand\floor[1]{\lfloor #1 \rfloor}
\newcommand\eqd{\equiv_d}
\newcommand{\polylog}[1]{\mathrm{polylog}{#1}}
\newcommand\red[1]{\textcolor{red}{#1}}
\begin{document}
\title{Fourier Sparsity of GF(2) Polynomials}
\vspace{1em}
\author{Hing Yin Tsang\thanks{University of Chicago, Chicago, IL 60637, USA. Email: {\texttt hytsang@uchicago.edu}} 
\and Ning Xie\thanks{Florida International University, Miami, FL 33199, USA. Email: {\texttt nxie@cis.fiu.edu}} 
\and Shengyu Zhang\thanks{The Chinese University of Hong Kong, Shatin, NT, Hong Kong. Email: {\texttt syzhang@cse.cuhk.edu.hk}}}
\date{}
\setcounter{page}{0}
\maketitle

\begin{abstract}
We study a conjecture called ``linear rank conjecture'' recently raised in (Tsang \etal, FOCS'13), 
which asserts that if many linear constraints are required to lower the degree of a GF(2) polynomial, 
then the Fourier sparsity (i.e. number of non-zero Fourier coefficients) of the polynomial must be large. 
We notice that the conjecture implies a surprising phenomenon that 
if the highest degree monomials of a GF(2) polynomial satisfy a certain condition,
then the Fourier sparsity of the polynomial is large regardless of the monomials of lower degrees -- 
whose number is generally much larger than that of the highest degree monomials.  
We develop a new technique for proving lower bound on the Fourier sparsity of GF(2) polynomials, 
and apply it to certain special classes of polynomials to showcase the above phenomenon. 
\end{abstract}

\section{Introduction}
The study of \emph{communication complexity}, introduced by Yao~\cite{Yao79} in 1979,
aims at investigating the minimum amount of information exchange required for computing functions whose
inputs are distributed among multiple parties~\cite{KN97}.
In the standard two-party setting, \alice holds an input $x$, \bob holds an input $y$, and they 
wish to compute a function $F$ on $(x,y)$ by as little communication as possible.
Perhaps the most important open problem in communication complexity is the so-called \emph{Log-rank Conjecture}
proposed by Lov{\'a}sz and Saks~\cite{LS88},
which states that the \emph{deterministic communication complexity} of any $F: \{0,1\}^{n} \times \{0,1\}^{n} \to \{0,1\}$, $\dcc(F)$, 
is upper bounded by a polynomial of the logarithm of the rank the communication matrix $M_F = [F(x,y)]_{x,y}$,
where the rank is taken over the reals.
Although a lot of effort has been devoted to the conjecture in the past two decades, 
very little progress has been achieved and the best upper bound known to date is 
$\dcc(F)=O\left(\sqrt{\rank(M_F)}\log \left(\rank(M_F)\right)\right)$, due to Lovett~\cite{Lov14}. 
Note that there is still an exponential gap between this and the best known lower bound, which is
$\dcc(F)=\Omega\left((\log\rank(M_F))^{\log_{3} 6}\right)$ due to Kushilevitz (unpublished, cf.~\cite{NW95}).
For an overview of recent developments in this direction, see~\cite{Lov14a}.

An interesting special class of functions computable by two parties is the so-called \emph{XOR functions}.
Specifically, $F$ is an XOR function if there exists an $f: \{0,1\}^{n} \to \{0,1\}$ such that
for all $x$ and $y$, $F(x,y) = f(x\oplus y)$, where $\oplus$ is the bit-wise XOR. 
Denote such $F$ by $f\circ \oplus$.
Besides including important examples such as Equality and Hamming Distance, 
XOR functions are particularly interesting for studying the Log-rank Conjecture 
due to its intimate connection with the analysis of Boolean functions.
Specifically, if $F$ is an XOR function, 
then the rank of $M_{F}$ is just the Fourier sparsity of $f$ (i.e., the number of non-zero Fourier coefficients of $f$)~\cite{BC99}. 
Therefore proving the Log-rank conjecture for XOR functions can be achieved by demonstrating 
short \emph{parity decision tree} protocols\footnote{Recall that a \emph{parity decision tree} $T$ for a function $f: \{0,1\}^{n} \to \{0,1\}$ generalizes an ordinary decision tree in the sense that each internal node of $T$ is now associated with a linear function $\ell(x)$, instead of a single bit, of the input, and $T$ branches according to the parity of $\ell(x)$.} computing Fourier sparse Boolean functions, 
and this problem attracted a lot of attention~\cite{ZS09,LZ10,MO09,TWXZ13,STV14} during the past years.

Recently, by viewing Boolean functions as $\gf$-polynomials, a new communication protocol based on $\gf$-degree reduction 
was proposed in~\cite{TWXZ13} for XOR functions:
suppose $f(x\oplus y)$ is a degree-$d$ polynomial and $r_d$ is the minimum number of variables (up to an invertible linear transformation) 
restricting of which reduces $f$'s degree to at most $d-1$, 
then Alice and Bob both apply the optimal linear map to their inputs and
send each other $r_d$ bits of their respective inputs.
Repeating this process at most $d-1$ times, the restricted function of $f$ becomes a constant function hence 
they successfully compute $f(x\oplus y)$.
Of course, such a protocol is efficient only if the numbers $r_d, r_{d-1}, \ldots, r_1$,  
of the restricted variables that they need to exchange, are not large. 
Studying these quantities, namely \emph{linear ranks} of polynomials, is one the central objectives of this paper.

\begin{definition}[linear rank of a polynomial]\label{def:linear_rank}
Let $f$ be a degree-$d$ polynomial, $V$ be a subspace in $\cube{n}$ and $H=a+V$ be any affine shift of $V$. 
Denote by $f|_H$ the restriction of $f$ on $H$. Then the \emph{linear rank} of $f$, 
denoted $\lrank(f)$, is the minimum co-dimension of any subspace $H$ such that the degree of $f|_H$ is strictly less than $d$;
that is,
\[
\lrank(f)=\min_{\deg_{2}(f|_{H})<\deg_{2}(f)}\codim(H). 
\]
\end{definition}

In other words, $\lrank(f)$ is the minimum number of linear functions one needs to fix 
in order to lower the degree of $f$. 
Consider, for example, the degree-$3$ polynomial 
$f(x_1, \ldots, x_{3n})=(x_1+\cdots+x_n)(x_{n+1}+\cdots+x_{2n})(x_{2n+1}+\cdots+x_{3n})$.
In the original basis, one needs to fix at least $n$ variables to lower the degree of $f$.
However, fixing one linear function $x_1 + \cdots + x_n = 0$ is
enough to lower its degree. Therefore $\lrank(f)=1$.  

For a Boolean function $f$, let $\spar(f)$ denote the Fourier sparsity of $f$ and 
$\D_\oplus(f)$ denote the parity decision tree complexity of $f$. 
As restrictions do not increase $\spar(f)$ (cf. Lemma~\ref{lem:rotation}) and  $\deg_{2}(f)\leq \log{\spar(f)}$ for every $f$, 
the following \emph{linear rank conjecture}---if true---would readily implies the
Log-rank Conjecture for XOR functions.

\begin{conjecture}[Linear rank conjecture~\cite{TWXZ13}]\label{conj:lin-rank}
For any $f:\BntB$, the linear rank of $f$ is upper bounded by polylogarithmic of the Fourier sparsity of $f$: 
$\lrank(f) = O(\log^c(\spar(f)))$ for some $c=O(1)$. 
Equivalently, if $\lrank(f)=r$, then $\spar(f)=2^{r^{\Omega(1)}}$.
\end{conjecture}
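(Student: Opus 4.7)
The plan is to prove the conjecture by induction on the degree $d=\deg_2(f)$, reducing the whole statement to a single-step lemma: if $\deg_2(f)=d$ and $\spar(f)\le s$, there is an affine subspace $H$ with $\codim(H)=O(\log^{c'} s)$ on which $f|_H$ has degree at most $d-1$. Since $d\le \log s$ and Fourier sparsity is non-increasing under affine restriction (Lemma~\ref{lem:rotation}), iterating this lemma $d$ times drives $f$ to a constant on a subspace of co-dimension $O(\log^{c'+1} s)$, yielding $\lrank(f)=O(\log^{c'+1}\spar(f))$ as required.

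First I would reduce the single-step lemma to a structural statement about the homogeneous degree-$d$ part $f_d$: an affine restriction can only lower $\deg_2(f)$ by killing its top-degree monomials, so it suffices to find $H$ on which $f_d\equiv 0$ as a degree-$d$ form. The natural invariant is the \emph{partition rank} $\mathrm{pr}(f_d)$: if $f_d=\sum_{i=1}^{r} g_ih_i$ with $\deg_2(g_i),\deg_2(h_i)\le d-1$, then degree-reducing each $g_i$ kills its contribution to $f_d$, and by the inductive hypothesis applied to $g_i$ this costs $O(\log^{c'} s)$ restrictions per factor. Thus the real goal becomes: show $\spar(f)\le s$ implies $\mathrm{pr}(f_d)\le\mathrm{polylog}(s)$. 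For $d=2$ this is classical, since $\mathrm{pr}(f_2)$ equals the $\gf$-rank of the bilinear form associated to $f_2$, which is at most $\log s$; this handles the base case with a linear exponent.

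The hard part, and the main obstacle, is the inductive step for $d\ge 3$. The natural bridge is bias: $\spar(f)\le s$ together with Parseval forces $\|\widehat{f}\|_\infty\ge 1/\sqrt{s}$, and after fixing at most $\log s$ linear functions one obtains a restriction of $f$ with bias at least $1/\sqrt{s}$; an inverse theorem for the Gowers $U^d$-norm would then convert this bias into a partition-rank decomposition of $f_d$. The chain has the correct shape but disastrous quantitative content: all known inverse Gowers theorems over $\gf$ yield only tower- or Ackermann-type bounds on partition rank in terms of $1/\mathrm{bias}$, whereas the conjecture demands polylogarithmic bounds. Closing this gap is precisely where new ideas must enter, and it seems to require exploiting features of Fourier-sparse Boolean functions that go well beyond bias: the $\ell_1$-Fourier norm is at most $\sqrt{s}$, the Fourier support is additively highly constrained, and the function is exactly $\{\pm 1\}$-valued rather than merely bounded.

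An alternative route worth pursuing in parallel is to bypass partition rank entirely and build the parity decision tree directly from $\widehat{f}$: at each node, query a linear function whose parity splits the Fourier support into two pieces, each of which obeys a structural complexity measure that strictly decreases; if every Fourier-sparse Boolean function admits such a ``splitting query,'' the resulting tree has depth $O(\mathrm{polylog}(s))$ and the conjecture follows. In either approach the full conjecture looks out of reach with current tools, so I would first validate the technique on structured subclasses of polynomials---restrictions on the support of the top-degree monomials, or direct bounds for low-partition-rank $f_d$---which is exactly the path the remainder of this paper takes.
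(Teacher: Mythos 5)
This statement is a \emph{conjecture}, and the paper contains no proof of it: Conjecture~\ref{conj:lin-rank} is the open problem motivating the work, and the remainder of the paper only verifies consequences of it for special classes (complete $d$-uniform maxonomials with $d$ a power of $2$ in Theorem~\ref{thm:symmetric_sparsity}, pairwise-disjoint maxonomials, and the grid-structured maxonomials of Section~\ref{sec:sparse_sparsity}). So there is nothing in the paper to compare your argument against line by line; the only question is whether your proposal actually closes the conjecture, and it does not --- as you yourself acknowledge. The decisive gap is the step ``$\spar(f)\le s$ implies $\mathrm{pr}(f_d)\le \mathrm{polylog}(s)$'': the only known route (bias from Parseval, then an inverse theorem for the Gowers $U^d$-norm over $\gf$) produces tower-type or worse dependence of the partition rank on $1/\mathrm{bias}$, which is exponentially far from the polylogarithmic bound the conjecture requires. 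Naming this as the place where new ideas must enter is accurate, but it means the argument is a research program, not a proof.

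Two further points in your reduction deserve flagging because they would be gaps even if the quantitative inverse theorem were available. First, when you apply the inductive hypothesis to a factor $g_i$ in a decomposition $f_d=\sum_i g_i h_i$, you have no control on $\spar(g_i)$: the factors arise from a structure theorem for the top-degree form and need not inherit Fourier sparsity from $f$, so the claimed cost of $O(\log^{c'} s)$ restrictions per factor is unjustified. Second, lowering $\deg_2(g_i)$ by one only kills the degree-$d$ contribution of $g_ih_i$ when $\deg_2(g_i)+\deg_2(h_i)=d$ exactly; a general partition-rank decomposition only guarantees this modulo lower-order terms, so some care is needed to ensure the restriction actually annihilates the top homogeneous part. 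Your closing suggestion --- validate the machinery on structured families of maxonomials --- is precisely what the paper does, via the weight-function/cover-counting technique of Section~\ref{sec:fourier} rather than via partition rank, but neither route proves the conjecture as stated.
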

Although it is still open whether the linear rank conjecture is equivalent to the Log-rank Conjecture for XOR functions,
it is worthwhile to note that it is equivalent to the stronger statement that
$\D_\oplus(f)=\polylog(\spar(f))$ for any Boolean function $f$.

\subsection{Large Fourier sparsity determined by highest degree monomials only} 
Before further discussing the linear rank conjecture, let us first state a lemma of~\cite{TWXZ13} (Lemma 19) 
in a slightly stronger form and give an alternative simple proof (another simple proof used polynomial derivatives ~\cite{CT13}). 
The lemma says that, once the linear subspace $V$ in Definition~\ref{def:linear_rank} is identified, 
it does not matter which affine shift is used in the definition
of linear rank: all affine subspaces of $V$ are equally good.
More specifically, if $f$ restricted to $a+V$ has degree at most $d-1 $ (where $d = \deg_2(f)$), 
then $f$ restricted to any other $a'+V$ also has degree at most $d-1$. 
This can be seen by the following argument. Call a monomial in $f$ a \emph{maxonomial} if it is of the maximal degree (i.e., degree $d$). 
Apply a linear map to $\cube{n}$ so that $V = \{x: x_1 = \cdots = x_r = 0\}$, where $r = \codim(V)$. 
Then $f|_{a+V}$ becomes a polynomial of degree at most $d-1$ if and only if every maxonomial of $f$ (under the new basis) contains
at least one variable in the set $\{x_1, \ldots, x_r\}$. 
Moreover, when this happens it does not matter whether $x_i$ ($i\le r$) is restricted to $0$ or $1$, the degree of the maxonomial always decreases, 
thus $\deg_2(f|_{a'+V}) \le d-1$ for all $a'\in \Bn$.

The above fact also reveals that 
the linear rank $r$ of any polynomial $f(x)$ is determined by the maxonomials in $f(x)$ \emph{only}. 
Fourier sparsity in general, on the other hand, should depend on all GF(2) monomials, not only those with the highest degree. 
However, the linear rank conjecture claims that if the maxonomials in $f(x)$ make the linear rank large,
then no matter how the lower-degree monomials behave, the Fourier sparsity is large.
Therefore, for the effect of forcing the Fourier sparsity of GF(2) polynomial to be large, 
there exists a surprising fact (assuming the linear rank conjecture) 
that can be summarized by paraphrasing a famous quote from \emph{Animal Farm}: 
``All monomials are equal, but some monomials are more equal than others''.

In retrospect, this phenomenon is known for some extremal cases. When $\deg_2(f) = 2$, 
the lower degree terms form a linear function $\chi_\alpha$, 
adding which only shifts Fourier spectrum by $\alpha$ and thus does not affect the Fourier sparsity. 
When $\deg_2(f) = n$, the Fourier sparsity is at least $2^{\deg_2(f)}-1 = 2^n-1$, 
which is again determined by the (unique) maxonomial. But for general $2<d<n$, 
maxonomials by themselves do not necessarily determine large Fourier sparsity. 
For instance, if there is only one maxonomial $x_1\ldots x_d$, 
then the Fourier sparsity can be as small as $2^d$ (when, say, the lower degree part is $x_1+\cdots +x_n$), 
and as large as $2^{n-d}$ (when, say, the lower degree part is a bent function\footnote{A Boolean function
$f:\{0,1\}^{m} \to \{-1,1\}$ is \emph{bent} if its Fourier coefficients satisfy that $|\hat{f}(\alpha)|=2^{-m/2}$
for all $\alpha \in \{0,1\}^{m}$.}
over $x_{d+1}$, \ldots, $x_n$). 
Despite this uncertainty, we will show that when the maxonomials form certain patterns, 
the Fourier sparsity is guaranteed to be large, regardless of the lower degree terms (
whose number can be much larger than that of maxonomials). 
One sufficient condition for the pattern is that the linear rank, which depends on maxonomials only, is large. 
And we will showcase some specific classes of good patterns.

Therefore, apart from leading directly to a proof of the Log-rank Conjecture for XOR functions, 
studying the linear rank conjecture is interesting in its own right, 
due to its close connection to the Fourier analysis of Boolean functions in the GF(2) polynomial representation.

\subsection{Our work}
We study the linear rank conjecture and in particular investigate how could the maxonomials of a $\gf$-polynomial possibly 
determine by themselves the Fourier sparsity of the polynomial. 
We develop a new technique which is able to show that, under certain circumstances, the Fourier sparsity is large
for all possible settings of lower degree monomials. 
It is hoped that this new framework of studying the Fourier coefficients based on 
GF(2) monomials
may be further extended and generalized to yield more structural results on the analysis of Boolean functions,
such as sparsity, granularity and Fourier mass distribution.

For general degree-$d$ polynomials, we investigate the linear rank and Fourier sparsity for several special cases. 
Since the maxonomials of a polynomial are the main concern of the conjecture,
it is convenient to borrow the terminology of hypergraphs to define these maxonomials. 
For example, the complete $d$-uniform maxonomials corresponds to the degree-$d$ polynomial who has all $\binom{n}{d}$ maxonomials.

\subsubsection{Linear rank of polynomials with complete $d$-uniform maxonomials}
We determine the exact values of the linear ranks of degree-$d$ polynomials with all $\binom{n}{d}$ maxonomials.
Specifically, let $f=\sum_{\text{$S\subset[n]$, $|S|=d$}}\prod_{i\in S}x_i + f'$, 
where $f'$ is an arbitrary polynomial of degree at most $d-1$, we show that for such an $f$, 
\[
\lrank(f) = 
\begin{cases}
\floor{\frac{n}{2}}-\frac{d}{2}+1 & \text{if $d$ is even},\\
1 & \text{if $d$ is odd}.
\end{cases}
\]
The proof exploits the symmetry of maxonomials and goes through a careful induction on $n$ and $d$.
In particular we prove a ``step-function'' type behaviour of the linear rank (for fixed $d$ and with respect to $n$),
by showing both upper and lower bounds for the number of linear functions one needs to fix in order to decrease the degree of the polynomial.

\subsubsection{Fourier sparsity of polynomials with complete $d$-uniform maxonomials}
If the linear rank conjecture is true, then for any polynomial with complete $d$-uniform maxonomials ($d$ is even), 
the Fourier sparsity must be $2^{n^{\Omega(1)}}$ regardless of the lower degree monomials. 
We are only able to verify this for a small (but infinite) set of $d$'s:
for any $d$ that is a power of $2$, if $f:\cube{n}\to \{0,1\}$ is a degree-$d$ polynomial with complete $d$-uniform maxonomials, then
\[
\spar(f)\geq 2^{d\cdot\lfloor n/d\rfloor}-1=\Omega(2^{n}).
\] 
We prove this sparsity lower bound by developing a new technique to be discussed more later.
Zhang and Shi~\cite{ZS09} proved that any symmetric boolean function has Fourier sparsity $2^{\Omega(n)}$, 
unless it is constant, the parity function over $n$ bits or its negation. 
However, as the polynomials considered there are symmetric, 
their result requires the degree-$d'$ monomials to be either empty or complete $d'$-uniform, for every $d' \leq d$.
On the contrary, our lower bound applies to a broader class of functions as it holds for all possible choices of lower degree monomials, 
as long as the highest-degree monomials are symmetric.

\subsubsection{Other results}
We further demonstrate the power of our technique by applying it to several other special forms of sparse maxonomials.
In particular, we show lower bounds on the Fourier sparsity of polynomials whose maxonomials are pairwise disjoint
or have certain ``regular'' overlaps.    

Gopalan \etal~\cite{GOS+11} studied the \emph{granularity} of a function's Fourier spectrum, which is the smallest 
integer $k$ such that all Fourier coefficients of the function can be expressed as integer multiples of $1/2^k$. 
They showed that for any Boolean function $f:\cube{n}\to \{0,1\}$, $\gran(f) \leq \log \spar(f)$.
On the other hand, by Parseval's identity, $\log \spar(f) \leq  2\gran(f)$. 
The granularity of a linear functions is $1$ and the maximum granularity of any $n$-variate quadratic
polynomial is $n/2$. It thus natural to conjecture that, for any $n$-variate low-degree polynomial $f(x)$, 
although $\spar(f)$ can be as large as $2^n$, the granularity of $f(x)$ is always bounded away from $n$.
We are able to apply our technique to show the following \emph{upper bound} on the granularity of low-degree polynomials: 
for any degree-$d$ polynomial $f$, $\gran(f) \leq n-\lceil \frac{n}{d} \rceil+1$. It is easy to see this bound is tight
as it is attained by the ``generalized inner product function'': $f(x)=x_1 x_2 \cdots x_d + \cdots + x_{(k-1)d+1}x_{(k-1)d+2}\cdots x_{kd}$,
where $n=kd$.

\subsubsection{Techniques}
The main challenge in proving sparsity lower bounds based on \emph{only} the maxonomials of a polynomial is how to isolate the effect
of \emph{all} lower degree monomials.
To the best of our knowledge, there is no prior method or result of this kind.
Our method is to first apply the standard procedure to transform a degree-$d$ polynomial $f$ into a Fourier polynomial,
and then define a ``weight function'' $w_{f}(T)$ on each set $T\subseteq[n]$
such that the Fourier coefficient of $f$ at any set $S$ can be written as $\sum_{T\supseteq S}w_{f}(T)$.
This implies that the weight function at $[n]$ is the most important term as it contributes to all the Fourier coefficients of $f$.
Another nice property of the weight function is that for any $T$, $2^{|T|}w_{f}(T)$ can be expressed as a sum of
alternating terms in which the $k^{\text{th}}$ term is $(-2)^{k}N_{k}(T)$,
where $N_{k}(T)$ is the number of ways to cover $T$ with (the supports of) exactly $k$ monomials of $f(x)$.
Therefore, the problem of computing the Fourier coefficients of an $\gf$-polynomial is now reduced to a combinatorial problem of counting 
the numbers of covers of all subsets of $[n]$ using various numbers of sets from the set family defined by the monomials of the polynomial.
Moreover, the parity of $2^{|T|}w_{f}(T)$ is likely to be determined by the numbers of smaller covers due the factor $(-2)^{k}$
in each term of the sum.  
Using the notion of ``granularity'' introduced in~\cite{GOS+11}, our strategy for showing sparsity lower bound is to
argue that $w_{f}([n])$ is the single one with the highest granularity among all weight function values.
Note that if $n=kd$ and we can cover $[n]$ with (the supports of) maxonomials of $f(x)$ only, then these covers 
would be the minimum covers as they require only $k=n/d$ sets while any cover involving lower monomials is of size at least $k+1$.
Hence to prove that $w_{f}([n])$ has the highest possible granularity, 
it suffices to show that the number of $k$-covers of $[n]$ is odd, as we did for the several sparsity lower bounds.

\subsection{Organization of the paper}

Section~\ref{sec:prelim} contains notations and preliminaries that will be used throughout the paper.
In Section~\ref{sec:lrank} we compute exactly the linear rank of polynomials with complete $d$-uniform maxonomials.
The basic machinery for proving sparsity lower bounds are described in Section~\ref{sec:fourier}, and we then
use this in Section~\ref{sec:complete_sparsity} to prove the linear rank conjecture 
for complete $d$-uniform polynomials when $d$ is a power of $2$.
In Section~\ref{sec:sparse_sparsity}, we apply our technique to study the sparsity of several more special polynomials 
and prove an upper bound on the granularity of low-degree polynomials.


\section{Preliminaries}\label{sec:prelim}
All logarithms in this paper are base 2. For two $n$-bit vectors $\alpha, \beta \in \Bn$, 
define their inner product as 
$\alpha \cdot \beta = \langle \alpha, \beta \rangle = \sum_{i=1}^n \alpha_i \beta_i \text{ mod }2$
and for simplicity we write $\alpha + \beta$ for $\alpha \oplus \beta$.
We often use $f$ to denote a real function defined on $\Bn$. 
In most occurrences $f$ is a Boolean function, whose range can be represented by either $\B$ or $\pmB$. 
For $f:\BntB$, we use $f^\pm = 1-2f$ to denote the equivalent Boolean function with range converted to $\pmB$. 

\subsection{GF(2) polynomials}
If $S\subseteq [n]$ is a set of (indices of) variables, then the monomial $x_S$ is the product of
variables in $S$: $x_S=\prod_{i \in S}x_i$. The \emph{degree} of this monomial is the cardinality of $S$,
and $S$ is called the \emph{support} of the monomial.
We say a set $T$ \emph{meets} a monomial $x_{S}$ if $T\cap S \neq \emptyset$.

Every Boolean function $f:\BntB$ can be uniquely expressed as a multilinear polynomial over $\gf$:
$p_{f}(x_1,\ldots, x_n)=\sum_{S\subseteq {\mathcal F}}x_{S}$ {where $\mathcal F$ is a collection of subsets of $[n]$} (here additions are performed modulo $2$).
The \emph{degree} of $f$, denoted $\deg_2(f)$, is the maximum degree of its monomials. 
In this paper, whenever there is no risk of confusion, we use $f$ and multilinear polynomial representation of $p_f$ interchangeably.

\subsection{Fourier analysis}
For any real function $f:\Bn\to\mbR$, 
the Fourier coefficients are defined by $\hat{f}(\alpha) = 2^{-n}\sum_{x} f(x)\chi_{\alpha}(x)$, 
where $\chi_{\alpha}(x) = (-1)^{\alpha \cdot x}$. 
The function $f$ can be written as 
$f(x) = \sum_{\alpha} \hat{f}(\alpha) \chi_{\alpha}(x)$. 
The Fourier sparsity of $f$, denoted by $\|\hat f\|_0$, 
is the number of nonzero Fourier coefficients of $f$. 
The Fourier coefficients of $f:\BntB$ and $f^\pm$ are related by 
$\widehat{f^\pm}(\alpha) = \delta_{\alpha,0^n} - 2 \hat f(\alpha)$, 
where $\delta_{x,y}$ is the Kronecker delta function.
Therefore we have 
\begin{equation}\label{eq:range switch}
\fn{f}{0} - 1 \leq \wfn{f^\pm}{0} \leq \fn{f}{0} + 1. 
\end{equation}
Sometimes we employ the one-to-one mapping between vectors in $\cube{n}$ and subsets of $[n]$: $x\leftrightarrow \{i\in [n]: x_i=1\}$,
and use the subsets of $[n]$ to index the Fourier coefficients.

For any function $f:\Bn\to\mbR$, 
Parseval's Identity says that $\sum_{\alpha} \hat{f}^{2}(\alpha) = \av_{x}[f(x)^2]$. 
When the range of $f$ is $\B$, then $\sum_{\alpha} \hat{f}^{2}(\alpha) = \av_x [f(x)]$. 
We sometimes use $\hat f$ to denote the vector of $\{\hat f(\alpha): \alpha\in \Bn\}$. 

\subsection{Granularity and sparsity of Fourier spectrum}
\begin{definition}[Granularity~\cite{GOS+11}]
A rational number $r$ is said to have \emph{granularity} $k$, denoted $\gran(r)=k$, if 
$r=\frac{m}{2^{k}}$ for some {odd} integer $m$. 
The \emph{Fourier granularity} of a Boolean function $f$, denoted $\gran(f)$,
is the maximum granularity over all the Fourier coefficients of $f$;
i.e., $\gran(f)=\max_{\alpha\in \Bn}(\gran(\hat{f}(\alpha)))$.
\end{definition}

Clearly, $\gran(-x)=\gran(x)$ for any $x \in \Q$.
An easy but useful fact is that $\gran(x+y) \leq \max(\gran(x), \gran(y))$
for all $x,y \in \Q$.
More generally, $\gran(\sum_{i=1}^{k}x_{i}) \leq \max_{1\leq i\leq k}\gran(x_{i})$,
where $x_{i}\in \Q$ for every $1\leq i \leq k$.

\begin{fact}\label{fact:granularity_XOR}
Let $f^{\pm},g^{\pm}:\cube{n} \to \{-1,1\}$ be two Boolean functions. 
Let Let $h=f\oplus g$.
Then $|\gran(f^{\pm})-\gran(g^{\pm})| \leq \gran(h^{\pm}) \leq \gran(f^{\pm})+\gran(g^{\pm})$. 
\end{fact}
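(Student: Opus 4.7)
The plan is to translate the XOR into pointwise multiplication via the $\pm 1$ encoding: because $f^\pm(x) = (-1)^{f(x)}$ and $g^\pm(x) = (-1)^{g(x)}$ are Boolean-valued, one has
$$f^\pm(x)\, g^\pm(x) \;=\; (-1)^{f(x)+g(x)} \;=\; (-1)^{f(x)\oplus g(x)} \;=\; h^\pm(x).$$
Pointwise multiplication on the cube corresponds to convolution in the Fourier basis, so for every $\alpha \in \{0,1\}^n$,
$$\widehat{h^\pm}(\alpha) \;=\; \sum_{\beta \in \{0,1\}^n} \widehat{f^\pm}(\beta)\; \widehat{g^\pm}(\alpha + \beta).$$

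For the upper bound I would combine two elementary observations about granularity. First, if $a = m_1/2^{k_1}$ and $b = m_2/2^{k_2}$ with $m_1, m_2$ odd, then $ab = (m_1 m_2)/2^{k_1+k_2}$ has odd numerator, so $\gran(ab) = \gran(a) + \gran(b)$ whenever $ab \neq 0$. Second, the subadditivity $\gran(\sum_i x_i) \leq \max_i \gran(x_i)$ already recorded above the fact. Feeding these into the convolution identity yields
$$\gran\bigl(\widehat{h^\pm}(\alpha)\bigr) \;\leq\; \max_{\beta} \bigl(\gran(\widehat{f^\pm}(\beta)) + \gran(\widehat{g^\pm}(\alpha+\beta))\bigr) \;\leq\; \gran(f^\pm) + \gran(g^\pm),$$
and taking $\max_\alpha$ on the left produces the claimed upper bound $\gran(h^\pm) \leq \gran(f^\pm)+\gran(g^\pm)$.

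The lower bound is then immediate from the self-inverse property of XOR: $h = f \oplus g$ rewrites as $f = h \oplus g$, so applying the upper bound just proved (with the roles of $h$ and $f$ swapped) gives $\gran(f^\pm) \leq \gran(h^\pm) + \gran(g^\pm)$, i.e.\ $\gran(h^\pm) \geq \gran(f^\pm) - \gran(g^\pm)$, and the symmetric swap of $f$ with $g$ yields the reverse inequality, so together they give $\gran(h^\pm) \geq |\gran(f^\pm) - \gran(g^\pm)|$. There is no genuine obstacle here; the only thing to handle carefully is terms of the convolution where one of the Fourier coefficients vanishes, but those contribute $0$ and can simply be dropped from the sum before applying the product-of-granularities identity.
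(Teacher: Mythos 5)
Your proof is correct and follows essentially the same route as the paper: the convolution identity $\widehat{h^\pm}(\alpha)=\sum_\beta \widehat{f^\pm}(\beta)\widehat{g^\pm}(\alpha+\beta)$ gives the upper bound via additivity of granularity under products and subadditivity under sums, and the lower bound comes from rewriting $f=h\oplus g$ and reapplying the upper bound. You merely spell out the two elementary granularity facts that the paper leaves implicit.
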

\begin{proof}
Since the Fourier spectrum of $h^{\pm}$ is given by the convolution formula
\[
\widehat{h^{\pm}}(\alpha) =\sum_{\beta \in \cube{n}}\widehat{f^{\pm}}(\beta)\widehat{g^{\pm}}(\alpha+\beta),
\]
the upper bound on $\gran(h^{\pm})$ follows directly from the definition of granularity. 
Now suppose $\gran(f^{\pm})\geq \gran(g^{\pm})$, then
applying the granularity upper bound on XOR of two functions we just show on $g\oplus h$, which is $f$, gives the desired lower bound. 
\end{proof}

Gopalan \etal~\cite{GOS+11} showed that, if a Boolean function has only a small number of non-zero
Fourier coefficients, then all these non-zero Fourier coefficients have small granularities.

\begin{lemma}[\cite{GOS+11}]\label{lem:gap}
Suppose $f^{\pm}:\cube{n} \to \{-1,1\}$ is $s$-sparse with $s>0$, then all the Fourier coefficients of $f^{\pm}$ 
have granularity at most $\lfloor \log{s}\rfloor - 1$.
\end{lemma}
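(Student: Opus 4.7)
The plan is to prove $s \geq 2^{g+1}$ where $g := \gran(f^\pm)$; since both sides are integers this is equivalent to the claimed bound $g \leq \lfloor \log s \rfloor - 1$. The case $g = 0$ is trivial once $s \geq 2$, so I concentrate on $g \geq 1$.

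I first rescale to an integer Diophantine setting. Let $c_\alpha := 2^g \hat f^\pm(\alpha)$; by the definition of $g$, each $c_\alpha \in \mathbb{Z}$ and at least one $c_\alpha$ is odd. Parseval's identity becomes
\[
\sum_\alpha c_\alpha^2 = 2^{2g},
\]
and the Boolean identity $(f^\pm)^2 \equiv 1$, once Fourier-expanded, yields the autocorrelation identities
\[
\sum_\beta c_\beta \, c_{\beta + \gamma} = 2^{2g}\,\delta_{\gamma, 0}, \qquad \gamma \in \{0,1\}^n.
\]
Setting $A := \{\alpha : c_\alpha \text{ odd}\}$, it is enough to show $|A| \geq 2^{g+1}$, since $s \geq |A|$. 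The base case $g = 1$ falls out of Parseval modulo $4$: odd squares are $\equiv 1$ and even squares are $\equiv 0 \pmod 4$, so $|A| \equiv 2^{2g} \equiv 0 \pmod 4$, giving $|A| \geq 4 = 2^{g+1}$.

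For $g \geq 2$, I would combine iterated modular Parseval analysis with the autocorrelation identities. Setting $a_j := |\{\alpha : v_2(c_\alpha) = j\}|$ for $j = 0, 1, \ldots, g-1$ (valuation $g$ is excluded once $s \geq 2$, since $|c_\alpha| = 2^g$ would force $f^\pm = \pm \chi_\alpha$), reducing Parseval modulo $2^k$ for growing $k$ -- using $(\text{odd})^2 \equiv 1 \pmod 8$, $(2 \cdot \text{odd})^2 \equiv 4 \pmod{32}$, and so on -- yields a system of congruences on the $a_j$'s. These congruences alone are too weak: for $g = 2$ they still permit $|A| = 4$ with total sparsity $s \in \{5, 7\}$. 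The autocorrelation identities then rule these small configurations out; for instance, picking $\gamma$ so that $\gamma + A$ meets the support at a $\pm 2$ coefficient forces the autocorrelation sum $\sum_\beta c_\beta c_{\beta + \gamma}$ to consist of an odd number of $\pm 2$ terms, which cannot equal zero.

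The main obstacle will be systematizing this combined Parseval / autocorrelation analysis for arbitrary $g$. A conceptually cleaner route would be to prove inductively on $g$ a structural claim -- for instance that $A$ contains an affine subspace of dimension at least $g + 1$ -- so that the bound $|A| \geq 2^{g+1}$ would follow geometrically. Converting the analytic Parseval and Boolean autocorrelation constraints into such a geometric statement is, I expect, the real technical crux of the proof.
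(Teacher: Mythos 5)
First, a framing point: the paper does not prove Lemma~\ref{lem:gap} at all --- it is imported verbatim from \cite{GOS+11} and used as a black box --- so there is no in-paper argument to compare yours against; I am evaluating your attempt on its own terms. Your setup is sound as far as it goes: the integers $c_\alpha = 2^g\widehat{f^\pm}(\alpha)$, the rescaled Parseval identity $\sum_\alpha c_\alpha^2 = 2^{2g}$, the autocorrelation identities coming from $(f^\pm)^2\equiv 1$, and the $g\le 1$ cases are all correct. But the lemma is only proven for $g\le 1$; for every $g\ge 2$ you offer a plan, not a proof, and you say so yourself (``the real technical crux''). That is a genuine gap, and it is the entire content of the lemma. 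Concretely: (i) you correctly observe that the mod-$2^k$ Parseval congruences alone cannot close even $g=2$ (they permit $|A|=4$); (ii) the autocorrelation patch is only gestured at for one configuration of the $g=2$ case --- the claim that a suitable $\gamma$ ``forces an odd number of $\pm2$ terms'' is not verified, and no mechanism (an induction, or a structural lemma about the set $A$) is given that would scale to arbitrary $g$; (iii) you are in fact aiming at a strictly stronger statement than needed, namely that the number of \emph{maximal-granularity} coefficients is at least $2^{g+1}$, and there is no evidence that the two global identities you retain (Parseval and autocorrelation) suffice to prove it. A minor additional quibble: as stated the lemma fails at $s=1$ (a character has granularity $0>\lfloor\log 1\rfloor-1$), and your reformulation $s\ge 2^{g+1}$ fails there too; this degenerate case should at least be acknowledged.

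The reason the known proofs succeed where a purely global Diophantine attack struggles is that they re-use Boolean-ness \emph{recursively}: restricting $f^\pm$ to an affine hyperplane $\{x:\gamma\cdot x=b\}$ yields another $\{-1,1\}$-valued function whose Fourier coefficients are the signed pair-sums $\widehat{f^\pm}(\delta)+(-1)^b\widehat{f^\pm}(\delta+\gamma)$ (cf.\ Lemma~\ref{lem:rotation} in this paper), so one can run an induction in which sparsity drops while granularity is controlled, rather than extracting everything from the two identities $\sum_\alpha c_\alpha^2=2^{2g}$ and $\sum_\beta c_\beta c_{\beta+\gamma}=0$ at the top level. If you want to salvage your approach, the structural claim you float at the end (that $A$ contains, or is, a union of cosets of a subspace of dimension $\ge g+1$) is the right kind of intermediate target, but it needs an actual proof; as written, the argument establishes the lemma only for functions of granularity at most $1$.
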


The following claim shows that the logarithm of the sparsity and granularity of a Boolean function 
are in fact equivalent up to a constant factor. 
\begin{proposition}\label{prop:sparsity_granularity}
Let $f^{\pm}:\cube{n} \to \{-1,1\}$ be a Boolean function, then
\[
\gran(f^{\pm}) + 1 \leq \log \spar(f^{\pm}) \leq 2\gran(f^{\pm}).
\]
\end{proposition}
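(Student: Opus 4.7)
The plan is to prove the two inequalities separately, with the upper bound following from Parseval's identity and the lower bound being essentially a restatement of Lemma~\ref{lem:gap}.

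For the upper bound $\log \spar(f^\pm) \leq 2\gran(f^\pm)$, I would argue as follows. Let $k = \gran(f^\pm)$. By definition, every Fourier coefficient $\widehat{f^\pm}(\alpha)$ is a rational of the form $m_\alpha / 2^{k_\alpha}$ with $m_\alpha$ an integer and $k_\alpha \leq k$; in particular, if $\widehat{f^\pm}(\alpha) \neq 0$ then $|\widehat{f^\pm}(\alpha)| \geq 2^{-k}$. On the other hand, Parseval's identity applied to the $\pm 1$-valued function gives $\sum_\alpha \widehat{f^\pm}(\alpha)^2 = \mathbb{E}[(f^\pm(x))^2] = 1$. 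Combining these,
\[
1 \;=\; \sum_{\alpha \,:\, \widehat{f^\pm}(\alpha)\neq 0} \widehat{f^\pm}(\alpha)^2 \;\geq\; \spar(f^\pm)\cdot 2^{-2k},
\]
which rearranges to $\spar(f^\pm) \leq 2^{2\gran(f^\pm)}$ and hence $\log \spar(f^\pm) \leq 2\gran(f^\pm)$.

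For the lower bound $\gran(f^\pm) + 1 \leq \log \spar(f^\pm)$, I would invoke Lemma~\ref{lem:gap} directly. Let $s = \spar(f^\pm)$; assuming $s > 0$, the lemma tells us that every Fourier coefficient has granularity at most $\lfloor \log s \rfloor - 1$. Taking the maximum over $\alpha$ gives $\gran(f^\pm) \leq \lfloor \log s \rfloor - 1 \leq \log s - 1$, which rearranges to the desired inequality. (The edge case $s = 0$ does not arise, since any Boolean function has at least one nonzero Fourier coefficient, namely the one at the empty set if $f$ is not identically $\pm 1$, and otherwise $\spar(f^\pm) = 1$ and $\gran(f^\pm) = 0$, so the inequality still holds.)

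Neither direction should present a real obstacle: both are essentially one-line arguments once Parseval and Lemma~\ref{lem:gap} are in hand. If anything, the only thing worth being careful about is the definition of granularity for zero, but since $\gran$ is the maximum over \emph{non-zero} Fourier coefficients (as is implicit in the definition requiring $m$ odd), this is a non-issue. I would present the upper bound first as it is the more substantive of the two, and then note that the lower bound is immediate from the preceding lemma.
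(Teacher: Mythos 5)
Your proof is correct and takes essentially the same route as the paper: the upper bound via Parseval's identity together with the observation that every nonzero coefficient has magnitude at least $2^{-\gran(f^{\pm})}$, and the lower bound as a direct invocation of Lemma~\ref{lem:gap}. (One nit: your parenthetical edge case is slightly off---for a constant function $\spar(f^{\pm})=1$ and $\gran(f^{\pm})=0$, so $\gran(f^{\pm})+1=1>0=\log\spar(f^{\pm})$ and the stated inequality actually degenerates there---but the paper's own proof silently ignores the same degenerate case.)
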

\begin{proof}
Suppose that $\gran(f^{\pm})=k$. Then for any $\alpha \in \cube{n}$,
if $\hat{f}^{\pm}(\alpha)\neq 0$, then $|\hat{f}^{\pm}(\alpha)|\geq 1/2^{k}$.
By Parseval's identity, we have $\spar(f^{\pm}) \leq 2^{2k}$,
or $\lceil \log(\spar(f^{\pm})) \rceil \leq 2k$. 
Combining with Lemma~\ref{lem:gap} gives the desired result.
\end{proof}

Note that both bounds in Proposition~\ref{prop:sparsity_granularity} 
are tight: for the first inequality, consider the $n$-variate degree-$n$ polynomial 
$f(x)=x_{1}x_{2}\cdots x_{n}$, which 
satisfies $\spar(f^{\pm})=2^{n}$ and $\gran(f^{\pm})=n-1$;
for the second inequality, consider for any even integer $n$ and the Inner Product function
on $n$ variables $f(x)=x_{1}x_{2}+x_{3}x_{4}+\cdots +x_{n-1}x_{n}$,
then $f^{\pm}$ has sparsity $2^{n}$ and granularity $n/2$.

\subsection{Linear maps and restrictions}
Sometimes we need to rotate the input space: For an \emph{invertible} linear map $L$ on $\Bn$, define $Lf$ by $Lf(x) = (f\circ L) (x) = f(Lx)$. 

For a function $f:\Bn\to\mbR$, define two subfunctions $f_0$ and $f_1$, 
both on $\B^{n-1}$: $f_b(x_2, \ldots, x_n) = f(b,x_2, \ldots, x_n)$. 
It is easy to see that for any 
$\alpha \in\B^{n-1}$, $\hat f_b(\alpha) = \hat f(0\alpha) + (-1)^b \hat f(1\alpha)$, thus 
\begin{equation}\label{eq:subfn norm}
	\fn{f_b}{0} \leq \fn{f}{0} \text{ and }\fn{f_b}{1} \leq \fn{f}{1}.
\end{equation}
{where $\|\hat f\|_p = (\sum_{\alpha} |\hat f(\alpha)|^p)^{1/p}$ and $\|\hat f\|_0 = |\{\alpha: \hat f(\alpha)\ne 0\}|$}. 
The notion of subfunctions can be generalized to restrictions with respect to a general direction. 
Suppose $f:\Bn\to\mbR$ and $S\subseteq \Bn$ is a subset of the domain. Then the restriction of $f$ on $S$, denoted by $f|_S$ is the function 
from $S$ to $\mbR$ defined naturally by $f|_S(x) = f(x)$, $\forall x\in S$. 
In this paper, we are concerned with restrictions on affine subspaces.

\begin{lemma}\label{lem:rotation}
Let $f:\Bn\to\mbR$ and $H = a+V$ be an affine subspace, 
then one can (recursively) define the spectrum $\widehat{f|_H}$ of the restricted function $f|_H$ such that
\begin{enumerate}
\item If $\codim(H) = 1$, then $\widehat{f|_H}$ is the collection of 
  $\hat f(\alpha) + (-1)^{b} \hat f(\alpha+\beta)$ for all unordered pair $(\alpha,\alpha+\beta)$, 
  where $\beta$ is the unique non-zero vector orthogonal to $V$, and $b = 0$ if $a \in V$ and $b = 1$ otherwise. 
\item $\wfn{f|_H}{p} \leq \fn{f}{p}$, for any $p\in [0,1]$. 
In particular, restriction does not increase the Fourier sparsity of a function.
\end{enumerate}
\end{lemma}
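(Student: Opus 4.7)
The plan is a two-step argument: first prove the codimension-one case by explicitly computing characters on $H$, then extend to arbitrary codimension by iteration.

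For part 1, fix $H = a+V$ with $\codim(H)=1$ and let $\beta$ be the unique nonzero vector orthogonal to $V$. The key observation is that $\chi_\beta$ is constant on $H$: for any $x \in H$, $\chi_\beta(x) = (-1)^{\langle \beta, x\rangle} = (-1)^{\langle \beta, a\rangle} = (-1)^b$. Consequently $\chi_{\alpha+\beta}|_H = (-1)^b\,\chi_\alpha|_H$ for every $\alpha$, so the contributions of $\alpha$ and $\alpha+\beta$ in the expansion $f = \sum_\alpha \hat f(\alpha)\chi_\alpha$ collapse onto a single character when restricted to $H$. Grouping by unordered pairs $\{\alpha,\alpha+\beta\}$ and using that the $2^{n-1}$ surviving characters (one per pair) form an orthonormal basis for functions on $H$ (after any identification $H \cong \gf^{n-1}$) reads off the claimed formula for the Fourier coefficients of $f|_H$.

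For part 2, the key inequality is the subadditivity $|u+v|^p \le |u|^p + |v|^p$, valid for all $u,v \in \mbR$ and $p \in (0,1]$, together with its $p=0$ indicator version $\mathbf{1}[u+v\ne 0] \le \mathbf{1}[u\ne 0] + \mathbf{1}[v\ne 0]$. Summing over pairs in the codim-1 case immediately gives
\[
\wfn{f|_H}{p}^p = \sum_{\{\alpha,\alpha+\beta\}}\bigl|\hat f(\alpha) + (-1)^b\,\hat f(\alpha+\beta)\bigr|^p \;\le\; \sum_{\alpha}|\hat f(\alpha)|^p = \fn{f}{p}^p.
\]
For general $H$ of codimension $k$, I would define $\widehat{f|_H}$ recursively: pick any affine chain $\Bn = H_0 \supset H_1 \supset \cdots \supset H_k = H$ with each $H_{i+1}$ of codimension one inside $H_i$, apply the codim-one definition successively to the restrictions $f|_{H_i}$ (viewed as functions on $\gf^{n-i}$), and iterate the displayed inequality $k$ times.

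The only subtle point is that the recursively defined spectrum depends on the chosen chain, and one similarly has some freedom in the $\pm 1$ signs attached to each coefficient (arising from the factor $\chi_\alpha(a)$ when representatives of pairs are chosen); however, the lemma asks only that \emph{some} valid definition satisfy the stated properties, and these signs do not affect absolute values, so this is not a real obstacle. It is also worth noting that the $p=0$ bound can be strict, since cancellation inside a pair can zero out a coefficient, which is fully consistent with the weak inequality in the statement.
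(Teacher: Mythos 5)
Your proof is correct and takes essentially the same route as the paper: the lemma is stated there without a separate proof, but the preceding computation $\hat f_b(\alpha) = \hat f(0\alpha) + (-1)^b \hat f(1\alpha)$ for subfunctions is exactly your codimension-one pairing identity written in coordinates (the general affine case following either via an invertible linear map as in Fact~\ref{fact:invertible2} or directly on $H$ as you do), and the subadditivity of $t\mapsto t^p$ plus iteration along a codimension-one chain is the intended argument for item 2.
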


It is worth noticing that, for any Boolean function, its $\gf$-degree, Fourier sparsity and granularity are all invariant under
invertible linear maps. 

\begin{fact} 
	\label{fact:invertible1}
Let $f$ be an $\gf$-polynomial. Then for any invertible linear map $L$,
$\deg_{2}(f)=\deg_{2}(f \circ L)$.
\end{fact}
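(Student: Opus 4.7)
The plan is to prove the claim by a two-sided inequality, using invertibility of $L$ to reduce the second direction to the first. The key observation is that when we substitute each variable $x_i$ of the multilinear $\gf$-representation of $f$ by the linear form $L_i(x) = \sum_{j} L_{ij} x_j$ (the $i$-th coordinate of $Lx$), the resulting polynomial expression has total degree at most $\deg_2(f)$ before we multilinearize.

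For the upper bound direction, I would take the unique multilinear representation $f(x)=\sum_{S \in \mathcal{F}} x_S$ (with $|S|\le d := \deg_2(f)$) and substitute. Each monomial $x_S$ becomes $\prod_{i\in S} L_i(x)$, which is a polynomial in $x_1,\dots,x_n$ of total degree at most $|S| \le d$. Summing these contributions over $S\in\mathcal F$ yields a (not necessarily multilinear) expression for $f\circ L$ of degree at most $d$. To obtain the multilinear representation, I would reduce modulo the relations $x_i^2 = x_i$ (valid pointwise on $\{0,1\}^n$), and note that each such reduction replaces a factor $x_i^2$ by $x_i$, which can only decrease the degree. Hence $\deg_2(f\circ L) \le d = \deg_2(f)$.

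For the matching lower bound, I would exploit invertibility directly. Since $L$ is invertible, $L^{-1}$ is also an invertible linear map on $\{0,1\}^n$, and $f = (f\circ L)\circ L^{-1}$. Applying the upper bound just proved with $f\circ L$ in place of $f$ and $L^{-1}$ in place of $L$ gives $\deg_2(f) = \deg_2((f\circ L)\circ L^{-1}) \le \deg_2(f\circ L)$. Combining the two inequalities yields the desired equality.

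The main (and essentially only) subtlety will be the reduction step $x_i^2 \mapsto x_i$; I want to make sure this reduction does not accidentally create higher-degree terms. This is immediate because multilinearization is monomial-wise: each non-multilinear monomial is of the form $x_{i_1}^{a_1}\cdots x_{i_k}^{a_k}$ with $a_j\ge 1$, and replacing the exponents by $1$ produces a monomial of degree exactly $k$, which is no larger than the original degree $a_1+\cdots+a_k$. So the rewriting is degree non-increasing, and the argument goes through cleanly.
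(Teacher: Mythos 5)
Your proof is correct. The paper states Fact~\ref{fact:invertible1} without any proof, treating it as a standard observation, so there is no argument of theirs to compare against; your two-step argument---(i) substituting the linear forms $L_i(x)$ into the unique multilinear representation and multilinearizing via $x_i^2 = x_i$, which is monomial-wise degree non-increasing, to get $\deg_2(f\circ L)\le \deg_2(f)$, and (ii) applying that inequality to $f\circ L$ and $L^{-1}$ to get the reverse---is exactly the standard justification and is complete. The one point worth making explicit is that you are invoking uniqueness of the multilinear $\gf$-representation: you construct \emph{some} multilinear polynomial of degree at most $\deg_2(f)$ agreeing with $f\circ L$ on $\{0,1\}^n$, and uniqueness is what lets you conclude this is \emph{the} representation whose degree defines $\deg_2(f\circ L)$. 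You do gesture at this, and possible cancellations during multilinearization only help, so the argument goes through.
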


\begin{fact} 
	\label{fact:invertible2}
Let $f:\BntB$ be a Boolean function and $L$ an invertible linear map. Then {$\widehat{f \circ L}(\alpha) = \hat{f}((L^T)^{-1} \alpha)$}. In particular, 
$\spar(f)=\spar(f \circ L)$
and $\gran(f)=\gran(f \circ L)$.
\end{fact}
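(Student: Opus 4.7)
The plan is a direct Fourier computation followed by the observation that multiplying the index by an invertible linear map only permutes the Fourier spectrum. First I would unfold the definition of the Fourier coefficient given in the Fourier analysis preliminaries:
\[
\widehat{f\circ L}(\alpha)\;=\;2^{-n}\sum_{x\in\{0,1\}^n} f(Lx)\,(-1)^{\alpha\cdot x}.
\]
Since $L$ is invertible as a $\gf$-linear map on $\{0,1\}^n$, the substitution $y=Lx$ (equivalently $x=L^{-1}y$) is a bijection of $\{0,1\}^n$, so I can re-index the sum by $y$. This gives
\[
\widehat{f\circ L}(\alpha)\;=\;2^{-n}\sum_{y\in\{0,1\}^n} f(y)\,(-1)^{\alpha\cdot L^{-1}y}.
\]

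Next I would use the transpose identity over $\gf$: for every $\alpha,y\in\{0,1\}^n$ and every $\gf$-linear map $M$, $\alpha\cdot(My)=(M^T\alpha)\cdot y\pmod 2$. Applying this with $M=L^{-1}$, and noting that $(L^{-1})^T=(L^T)^{-1}$, the exponent becomes $((L^T)^{-1}\alpha)\cdot y$. The sum then matches the definition of $\hat f$ evaluated at $(L^T)^{-1}\alpha$, yielding the claimed formula $\widehat{f\circ L}(\alpha)=\hat f((L^T)^{-1}\alpha)$.

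For the two ``in particular'' consequences, the key observation is that $(L^T)^{-1}$ is itself an invertible $\gf$-linear map, hence a bijection of $\{0,1\}^n$. Therefore the multiset $\{\widehat{f\circ L}(\alpha):\alpha\in\{0,1\}^n\}$ is just a permutation of $\{\hat f(\beta):\beta\in\{0,1\}^n\}$. In particular the number of nonzero entries is the same, so $\spar(f\circ L)=\spar(f)$; and the maximum granularity over the spectrum is also unchanged, giving $\gran(f\circ L)=\gran(f)$.

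There is no real obstacle here — the only thing to be slightly careful about is that transposition and inversion are performed with respect to the $\gf$-bilinear pairing $\alpha\cdot y=\sum_i\alpha_iy_i\bmod 2$, but since $L$ is a linear map over $\gf$ this is the natural pairing and the identity $\alpha\cdot(My)=(M^T\alpha)\cdot y$ is standard. Nothing further is needed.
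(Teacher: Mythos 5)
Your argument is correct and is exactly the standard change-of-variables computation that the paper implicitly relies on (it states this fact without proof): substituting $y=Lx$, applying the transpose identity $\alpha\cdot(L^{-1}y)=((L^{T})^{-1}\alpha)\cdot y$ over \gf, and noting that $(L^{T})^{-1}$ permutes the index set so that sparsity and granularity are invariant. Nothing is missing.
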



\section{Linear rank of complete $d$-uniform maxonomials}\label{sec:lrank}

We now compute the exact value of the linear rank of a degree $d$ polynomial whose set of maxonomials
consists of all $\binom{n}{d}$ degree-$d$ monomials, and give explicit linear constraints restriction of which reduces
the degree of such a polynomial.

Define $\mathcal{C}_{d,n}(x) = \sum_{I \subseteq [n], |I| = d} \prod_{i\in I}x_i$, the summation of all degree-$d$ monomials over variables $x_1, \ldots, x_n\in \gf$. The subscript $n$ is dropped when it is clear from the context. We use the equivalence relation $\eqd$ for polynomials with the same maxonomials, \ie $p \eqd q$ if both $p$ and $q$ have \gf-degree $d$ and $p+q$ 
has \gf-degree strictly less than $d$. It is clear that if $p \eqd q$, then $\lrank(p) = \lrank(q)$.

\begin{theorem}
\label{thm:symlrank}
Let $n\ge d\ge 0$ be integers. Then the following hold:
\begin{enumerate}
\item
If $d$ is odd, then $\lrank(\mathcal{C}_{d,n}) = 1$.
\item
If  $d$ is even, then $\lrank(\mathcal{C}_{d,n}) = \floor{\frac{n}{2}}-\frac{d}{2}+1$, i.e.
\begin{equation*}
\lrank(\mathcal{C}_{d,n}) =
\begin{cases}
\frac{n-d}{2}+1 & \text{if $n$ is even},\\
\frac{n-d-1}{2}+1 & \text{if $n$ is odd}.
\end{cases}
\end{equation*}
\end{enumerate}
\end{theorem}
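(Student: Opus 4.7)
My plan is to split the argument according to the parity of $d$.

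For odd $d$, I would apply Newton's identity $k\, e_k = \sum_{i=1}^k (-1)^{i-1} p_i e_{k-i}$ (valid over $\mathbb{Z}$), reduce modulo $2$, and use that in the multilinear quotient $\gf[x_1,\ldots,x_n]/(x_j^2 - x_j)$ the power sums collapse to $p_i = p_1 = x_1 + \cdots + x_n$ for every $i \ge 1$. For odd $d$ this yields $\mathcal{C}_{d,n} \equiv (x_1 + \cdots + x_n)\cdot (e_0 + e_1 + \cdots + e_{d-1}) \pmod{2}$, so restriction to the codimension-$1$ hyperplane $\{x_1 + \cdots + x_n = 0\}$ kills $\mathcal{C}_{d,n}$ outright and so drops its degree. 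Combined with the trivial $\lrank \ge 1$, this gives $\lrank(\mathcal{C}_{d,n}) = 1$.

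For even $d$, the upper bound is achieved by the pairing subspace $V = \{x : x_{2i-1} = x_{2i} \text{ for } 1 \le i \le k\}$ of codimension $k = \lfloor n/2 \rfloor - d/2 + 1$. Substituting $x_{2i-1} = x_{2i} =: y_i$ into each degree-$d$ monomial and reducing $y_i^2 = y_i$, a direct count shows that the $\gf$-coefficient of the reduced monomial $y_J z_I$ (with $J$ indexing a subset of pairs and $I$ a subset of unpaired coordinates) equals $\binom{|J|}{d - |I| - |J|} \cdot 2^{2|J| + |I| - d} \pmod{2}$, which is nonzero only when $|I| + 2|J| = d$. The total degree $|I| + |J| = d - |J|$ reaches $d$ only at $|J| = 0$, forcing $|I| = d \le n - 2k$; but $n - 2k \in \{d-2, d-1\}$ by construction, so $|J| \ge 1$ is required, and the restriction has degree $\le d - 1$.

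For the matching lower bound I would induct on $n$, with base case $n = d$ where $\mathcal{C}_{d,d} = x_1 \cdots x_d$ trivially has $\lrank = 1 = \lfloor d/2 \rfloor - d/2 + 1$. In the inductive step, let $V$ be a linear subspace of codimension $r$ on which $\mathcal{C}_{d,n}|_V$ has degree $< d$, and split on whether some coordinate $x_i$ is a linear combination of $V$'s defining functionals (equivalently, $V \subseteq \{x_i = 0\}$). If yes (Case A), a permutation puts $i = n$, and $V$ sits as a codimension-$(r-1)$ subspace of $\gf^{n-1}$ on which $\mathcal{C}_{d,n-1}$ has degree $< d$; the induction hypothesis together with the inequality $\lfloor (n-1)/2 \rfloor + 1 \ge \lfloor n/2 \rfloor$ closes the bound for both parities of $n$. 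If no (Case B), the projection of $V$ to each coordinate is surjective; when $n$ is odd I can intersect $V$ with any $\{x_i = 0\}$ to obtain codimension exactly $r$ inside $\gf^{n-1}$, and applying the induction hypothesis to $\mathcal{C}_{d,n-1}$ with $n-1$ even gives the target bound on the nose.

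The genuinely delicate situation I expect to be the main obstacle is Case B with $n$ even, where intersecting with $\{x_i = 0\}$ only yields $r \ge \lrank(\mathcal{C}_{d,n-1}) = n/2 - d/2$, falling short by one. To recover the missing $+1$ I would try a more global argument: either analysing the polarised symmetric $d$-linear form $M(v_1,\ldots,v_d) = \sum_{T \subseteq [d]} \mathcal{C}_{d,n}\bigl(\sum_{i \in T} v_i\bigr)$ which must vanish on $V^d$ (this reduces to the isotropic-subspace bound for the alternating form $J + I$ when $d = 2$), or setting up a secondary induction on $d$ via the identity $\partial^2 \mathcal{C}_{d,n}/\partial x_i\, \partial x_j = \mathcal{C}_{d-2,n-2}$, or exploiting the Hamming-weight formula $\mathcal{C}_{d,n}(x) = \binom{w(x)}{d} \pmod{2}$ to convert the projection-surjective hypothesis into an additional GF(2) identity on the defining functionals. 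Turning the qualitative surjectivity of Case B into this sharp quantitative gain of one in codimension is where I anticipate the most careful combinatorial work.
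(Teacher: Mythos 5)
Your treatment of the odd-$d$ case and of the even-$d$ \emph{upper} bound is sound. The Newton's-identity derivation of $\mathcal{C}_{d,n} \eqd \mathcal{C}_{1,n}(e_0+\cdots+e_{d-1})$ is a legitimate variant of the paper's direct expansion of $\mathcal{C}_{1,n}\mathcal{C}_{d-1,n}$, and your pairing subspace $\{x_{2i-1}=x_{2i}:\ 1\le i\le k\}$ with the coefficient count $\binom{|J|}{d-|I|-|J|}2^{2|J|+|I|-d}$ is correct and is essentially the explicit form of the restrictions that the paper obtains by iterating its recurrence $\mathcal{C}_{d,n+2}|_{x_{n+2}=x_{n+1}} \eqd \mathcal{C}_{d,n}$.

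The genuine gap is exactly where you place it: the lower bound for even $d$ in Case B with $n$ even. Everything you offer there (polarisation, second derivatives, the Hamming-weight formula) is a list of candidate strategies, none of which is carried out, and this case is the entire content of the theorem --- without the extra ``$+1$'' every two steps the induction only yields $\lrank(\mathcal{C}_{d,n})\ge 1$, which is trivial. Moreover, the coordinate-by-coordinate reduction you set up is unlikely to close this case on its own, because it only ever uses that $\mathcal{C}_{d,n}|_{x_n=c}$ has the maxonomial structure of $\mathcal{C}_{d,n-1}$; it never exploits which \emph{linear forms} cut out $V$. The paper's argument does exploit this, and in a structurally different way: it proves that $\lrank(\mathcal{C}_{d,n})$ cannot take the same value for three consecutive $n$. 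Assuming $\lrank(\mathcal{C}_{d,n+2})=\lrank(\mathcal{C}_{d,n+1})=\lrank(\mathcal{C}_{d,n})=r$, it takes an optimal restriction for $\mathcal{C}_{d,n+2}$ containing $x_{n+2}=\ell(x)$, and then imposes the \emph{additional} restriction $x_{n+1}=x_1+\cdots+x_n$; using $\mathcal{C}_{d-1,m}\eqd\mathcal{C}_{d-2,m}\mathcal{C}_{1,m}$ (valid since $d-1$ is odd) and $\mathcal{C}_1^2=\mathcal{C}_1$, both cross terms die and the result is $\eqd\mathcal{C}_{d,n}$, giving $r\le r-1$. If you want to salvage your framework, the missing ingredient is precisely an argument of this flavour: in your Case B with $n$ even you must restrict along a carefully chosen non-coordinate hyperplane (such as $\mathcal{C}_{1}=0$) rather than a coordinate one, so that the symmetry of the maxonomials forces a strict drop in codimension. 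As written, the proof of item 2's lower bound is not complete.
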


\begin{proof}
The first item follows simply by the factorization $\mathcal{C}_{d,n} \eqd \mathcal{C}_{1,n}\mathcal{C}_{d-1,n}$. 
Indeed, when we multiply $\mathcal{C}_{1,n} = \sum_{i\in [n]} x_i$ and 
$\mathcal{C}_{d-1,n} = \sum_{|I| = d-1} x_I$, for $i\notin I$, $x_i x_I = x_{I\cup\{i\}}$, 
and each $J$ with $|J| = d$ comes from $d$ many $(i,I)$. 
For each $i\in I$, $x_i x_I = x_I$, and each resulting $x_I$ with $|I| = d-1$ comes from $d-1$ many $i \in I$. Thus 
\begin{align*}
\mathcal{C}_{1,n} \mathcal{C}_{d-1,n} & = d \Big(\sum_{|J| = d} x_J \Big) + 
(d-1)\Big(\sum_{|I| = d-1} x_I \Big) = d\mathcal{C}_{d,n}+(d-1)\mathcal{C}_{d-1,n} \\
& = \mathcal{C}_{d,n},
\end{align*}
for all odd $d$.

Now we consider the second item in the statement and assume from now on that $d$ is even and $d \leq n$. 
The second item follows from the following two claims. 

\begin{claim}
\label{claim:lowerb}
If $\lrank(\mathcal{C}_{d,n+1}) = \lrank(\mathcal{C}_{d,n})$, then $\lrank(\mathcal{C}_{d,n+2}) > \lrank(\mathcal{C}_{d,n+1})$.
\end{claim}

\begin{claim}
\label{claim:upperb}
$\lrank(\mathcal{C}_{d,n+2}) \leq \lrank(\mathcal{C}_{d,n})+1$.
\end{claim}

Let us first show Theorem \ref{thm:symlrank} assuming these two lemmas. We prove by induction on the number of variables that for all $k \geq d/2$,
\begin{equation} \label{eqn:induction}
\lrank(\mathcal{C}_{d,2k}) = \lrank(\mathcal{C}_{d,2k+1}) = k-\frac{d}{2}+1.
\end{equation}
which is just a restatement of the second item of Theorem \ref{thm:symlrank}. 
\paragraph{base case $k = d/2$.} We have
\begin{align}
\mathcal{C}_{d}(x_1, \ldots, x_{2k}) = \mathcal{C}_{d}(x_1, \ldots, x_{d}) =  \mathcal{C}_{d-1}(x_1, \ldots, x_{d-1})\cdot x_{d}, \label{eqn:even}
\end{align}
so $\lrank(\mathcal{C}_{d,2k}) = 1$. For $n=2k+1$, note that
\begin{align}
\mathcal{C}_{d}(x_1, \ldots, x_{2k+1}) 
& = \mathcal{C}_{d}(x_1, \ldots, x_{d+1}) \nonumber \\
& = \mathcal{C}_{d-1}(x_1, \ldots, x_{d-1})(x_{d}+x_{d+1})+\mathcal{C}_{d-2}(x_1, \ldots, x_{d-1})x_{d}x_{d+1}, \label{eqn:odd}
\end{align} 
Putting restriction $x_d=x_{d+1}$ makes the first summand vanish and decreases the degree of the second summand, hence $\lrank(\mathcal{C}_{d,2k+1}) = 1$. 

\paragraph{general $k$.} Now we assume that Eq.~\eqref{eqn:induction} holds for $k$ and will prove the case for $k+1$. The following sequence of inequalities hold.
\begin{equation*}
k-\frac{d}{2}+1 < \lrank(\mathcal{C}_{d,2(k+1)}) \leq \lrank(\mathcal{C}_{d,2(k+1)+1}) \leq k-\frac{d}{2}+2,
\end{equation*}
where the first inequality follows by Claim~\ref{claim:lowerb}; 
the second follows by the facts that $\mathcal{C}_{d,n-1}$ can be obtained from $\mathcal{C}_{d,n}$ by restricting $x_n = 0$ 
and restriction does not increase $\lrank$; 
and the last inequality follows by Claim~\ref{claim:upperb}. 
Therefore Eq.~\eqref{eqn:induction} also holds for $k+1$.
\end{proof}

Now it remains to prove the two claims. We start with Claim~\ref{claim:upperb}, which is simpler.

\begin{proof}[Proof of Claim~\ref{claim:upperb}]
We first observe the following identity:
\begin{align}
\mathcal{C}_{d}(x_1, \dots, x_{n+2}) 
&= \mathcal{C}_{d}(x_1, \dots, x_n) + \mathcal{C}_{d-1}(x_1, \dots, x_n)(x_{n+1} + x_{n+2}) 
   + \mathcal{C}_{d-2}(x_1, \dots, x_n) x_{n+1} x_{n+2} \nonumber \\
&\eqd \mathcal{C}_{d}(x_1, \dots, x_n) + \mathcal{C}_{d-1}(x_1, \dots, x_n, x_{n+1})(x_{n+1} + x_{n+2}). \label{eqn:recurrence} 
\end{align}
Therefore the restriction $x_{n+2} = x_{n+1}$ reduces $\mathcal{C}_{d}(x_1, \ldots, x_{n+2})$ to
\begin{align*}
\mathcal{C}_{d}(x_1, \ldots, x_{n+2})|_{x_{n+1}=x_{n+2}} \eqd \mathcal{C}_{d}(x_1, \dots, x_n).
\end{align*}

Since each restriction can reduce $\lrank$ by at most 1, we have
\begin{equation*}
\lrank(\mathcal{C}_{d,n+2})-1 \leq \lrank(\mathcal{C}_{d,n+2}|_{x_{n+2} = x_{n+1}}) = \lrank(\mathcal{C}_{d,n}),
\end{equation*}
as desired.
\end{proof}

\begin{proof}[Proof of Claim~\ref{claim:lowerb}]
For the sake of contradiction, assume that
\begin{equation*}
\lrank(\mathcal{C}_{d,n+2}) = \lrank(\mathcal{C}_{d,n+1}) = \lrank(\mathcal{C}_{d,n}) = r.
\end{equation*}
Fix an optimal set of linear restrictions for $\lrank(\mathcal{C}_{d,n+2})$. Without loss of generality, 
we can assume it contains a restriction of the form $x_{n+2} = \ell(x_1, \dots, x_{n+1}) = \ell(x)$ for some linear form $\ell$. 
It is clear that such restriction will reduce the $\lrank$ by exactly 1. So we have
\begin{equation}\label{eq:lrC2}
\lrank(\mathcal{C}_{d,n+2}|_{x_{n+2} = \ell(x)}) \leq \lrank(\mathcal{C}_{d,n+2})-1 = r-1.
\end{equation}
But by the expansion
\begin{equation*}
\mathcal{C}_{d}(x_1, \dots, x_{m+1}) = \mathcal{C}_d(x_1, \dots, x_m) + \mathcal{C}_{d-1}(x_1, \dots, x_m)x_{m+1},
\end{equation*}
we have
\begin{align}
\mathcal{C}_{d}(x_1, \dots, x_{n+2})|_{x_{n+2} = \ell(x)} & = \mathcal{C}_d(x_1, \dots, x_{n+1}) + \mathcal{C}_{d-1}(x_1, \dots, x_{n+1})\ell(x) \nonumber \\
& =  \mathcal{C}_d(x_1, \dots, x_n) + \mathcal{C}_{d-1}(x_1, \dots, x_n)x_{n+1} + \mathcal{C}_{d-1}(x_1, \dots, x_{n+1})\ell(x).
\label{eq:lrC3}
\end{align}
Now, consider to further restrict $x_{n+1} = x_1 + x_2 + \dots + x_n = \mathcal{C}_{1}(x_1, \dots, x_n)$. 
By the fact that $\mathcal{C}_{d-1}(x_1, \dots, x_m) \eqd \mathcal{C}_{d-2}(x_1, \dots, x_m)\mathcal{C}_{1}(x_1, \dots, x_m)$ 
for every even $d \geq 4$, 
the second term on the right of Eq.\eqref{eq:lrC3} is $\eqd$-equivalent to
\begin{align*}
& \mathcal{C}_{d-2}(x_1, \dots, x_n)\mathcal{C}_1(x_1, \dots, x_n)x_{n+1}|_{x_{n+1} = \mathcal{C}_1(x_1, \dots, x_n)} \\
= \ & \mathcal{C}_{d-2}(x_1, \dots, x_n)\mathcal{C}_1^2(x_1, \dots, x_n) \\
= \ & \mathcal{C}_{d-2}(x_1, \dots, x_n)\mathcal{C}_1(x_1, \dots, x_n) \eqd 0,
\end{align*}
and the last term becomes
\begin{equation*}
\mathcal{C}_{d-2}(x_1, \dots, x_{n+1})\mathcal{C}_1(x_1, \dots, x_{n+1}) \ell(x)|_{x_{n+1} = \mathcal{C}_1(x_1, \dots, x_n)} = 0.
\end{equation*}
Plugging these two back to Eq.\eqref{eq:lrC3}, 
\begin{equation*}
\mathcal{C}_{d,n+2}|_{x_{n+2} = \ell(x), x_{n+1} = x_1 + \dots + x_{n}} \eqd \mathcal{C}_{d,n}.
\end{equation*}
As restriction does not increase linear rank, we have from Eq.\eqref{eq:lrC2} that
\begin{equation*}
r = \lrank(\mathcal{C}_{d,n}) = \lrank(\mathcal{C}_{d,n+2}|_{x_{n+2} = \ell(x), x_{n+1} = x_1 + \dots + x_{n}}) \le \lrank(\mathcal{C}_{d,n+2}|_{x_{n+2} = \ell(x)}) \leq r-1,
\end{equation*}
which is a contradiction.
\end{proof}

As a simple application of Theorem~\ref{thm:symlrank}, for any symmetric function $f$, let $r_1$, $r_0$ be the largest and smallest integers such that $f(x)$ is constant or parity on $\{x\in \cube{n} : r_0 \leq |x| \leq n-r_1\}$. The quantity $r \defeq r_0 + r_1$ turns out to be an important complexity measure for symmetric functions. For example, the randomized and quantum communication complexity of symmetric XOR functions is characterized by this $r$ (\cite{ZS09,LLZ11,LZ13}), and $\log\fn{f}{1} = \Theta(r\log(n/r))$ for all symmetric functions $f$ (\cite{AFH12}). 

Here we relate this measure to the \gf-degree of $f$. It is clear that we can fix $x_1 = x_2 = \dots = x_{r_0} = 1$ and $x_n = x_{n-1} = \dots = x_{n - r_1 + 1} = 0$ to reduce the degree of $f$ to at most $1$. We therefore have the following corollary.

\begin{corollary}
Let $f$ be a symmetric function with even \gf-degree $d$, then 
\begin{enumerate}
	\item $\floor{\frac{n}{2}}-\frac{d}{2}+1 \leq r_0 + r_1$.
	\item $\log \fn{f}{1} = \Omega(n/\log n)$, if $d = (1-\Omega(1))n$.
\end{enumerate} 
\end{corollary}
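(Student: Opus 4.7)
My plan is to derive both items from Theorem~\ref{thm:symlrank} combined with the explicit linear restriction discussed in the sentence just before the corollary. The preliminary observation is that symmetry forces every degree-$d$ monomial in the multilinear representation of $f$ to carry the same coefficient, and since $\deg_2(f)=d$ at least one of those coefficients is nonzero, so all $\binom{n}{d}$ degree-$d$ monomials appear in $f$. Thus the maxonomials of $f$ form the complete $d$-uniform family, and Theorem~\ref{thm:symlrank} (applied in the even-$d$ case) gives $\lrank(f)=\floor{n/2}-d/2+1$.

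For item~1 I would pair this equality with the upper bound $\lrank(f)\le r_0+r_1$ implicit in the paragraph before the corollary: the affine restriction $x_1=\cdots=x_{r_0}=1$ together with $x_{n-r_1+1}=\cdots=x_n=0$ has codimension $r_0+r_1$ and, by the defining property of $r_0,r_1$, reduces $f$ to a constant or a parity on the surviving variables, i.e.\ to a polynomial of $\gf$-degree at most $1$. Since $d\ge 2$, this is strictly less than $d$, so $\lrank(f)\le r_0+r_1$. Chaining the two bounds gives $\floor{n/2}-d/2+1\le r_0+r_1$, which is item~1.

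For item~2 I would substitute the lower bound from item~1 into the cited AFH12 estimate $\log\fn{f}{1}=\Theta(r\log(n/r))$ with $r:=r_0+r_1$. Under the hypothesis $d=(1-\Omega(1))n$, item~1 yields $r=\Omega(n)$, and from here a short analysis of $r\mapsto r\log(n/r)$ on the admissible range delivers $r\log(n/r)=\Omega(n/\log n)$: when $r\le n/2$ the factor $\log(n/r)\ge 1$ already gives $\Omega(n)$, and when $r>n/2$ one uses $\log(n/r)\ge \Omega(1/\log n)$ together with $r=\Theta(n)$. The main wrinkle I expect is ruling out the degenerate endpoint $r\approx n$, where $\log(n/r)$ becomes too small; this should be handled either by the normalization built into the AFH12 statement (typically applied after a complementation that enforces $r\le n/2$ while preserving $\fn{f}{1}$, using the identity $\widehat{f\oplus\chi_{[n]}}(S)=\hat f(S\oplus[n])$) or by a direct upper bound on $r$ coming from the assumption $\deg_2(f)\ge 2$ together with the structure of symmetric functions. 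Once this endpoint is dealt with, the conclusion is immediate.
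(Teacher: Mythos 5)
Your proposal matches the paper's (implicit) argument exactly: the paper offers no proof beyond the observation that the restriction $x_1=\cdots=x_{r_0}=1$, $x_{n-r_1+1}=\cdots=x_n=0$ certifies $\lrank(f)\le r_0+r_1$, which combined with Theorem~\ref{thm:symlrank} (applicable because symmetry of $f$ forces its maxonomial set to be the complete $d$-uniform family) yields item~1, and item~2 is then read off from the cited AFH12 characterization with $r=\Omega(n)$. One small caution: your intermediate claim that $\log(n/r)=\Omega(1/\log n)$ whenever $r>n/2$ is false for $r=n-O(1)$ (there $\log(n/r)=\Theta(1/n)$), but you explicitly flag this endpoint as the remaining issue, and the parity-shift normalization built into the AFH12 statement, which keeps the relevant parameter at most $n/2$, is indeed the correct way to dispose of it.
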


\ignore{
One can also obtain a lower bound on $\|\hat{f}\|_1$ in terms of $n, d$ using the characterization of 
spectral norm in terms of $r_0$ and $r_1$. \red{to do: easy but need to check details out.}
}

\subsection{An explicit form of linear restrictions for complete $d$-uniform monomials}
The proof of Theorem~\ref{thm:symlrank} can be used to find a linear transformation
which explicitly show the restrictions for $\mathcal{C}_{d,n}$. Indeed, 
starting from either Eq.~\eqref{eqn:even} or Eq.~\eqref{eqn:odd} and recursively applying Eq.~\eqref{eqn:recurrence},
gives, when $n=d+2k$ is even,
\begin{align*}
&\quad \mathcal{C}_{d}(x_1, \ldots, x_{n}) \\
&\eqd \mathcal{C}_{d}(x_1, \ldots, x_{n-2})+(x_{n-1}+x_{n})\mathcal{C}_{d-1}(x_1, \ldots, x_{n-1}) \\
&\eqd \mathcal{C}_{d}(x_1, \ldots, x_{n-4})+ (x_{n-3}+x_{n-2})\mathcal{C}_{d-1}(x_1, \ldots, x_{n-3}) + 
   (x_{n-1}+x_{n})\mathcal{C}_{d-1}(x_1, \ldots, x_{n-1}) \\
&\eqd \cdots \cdots \\
&\eqd \mathcal{C}_{d}(x_1, \ldots, x_{d})+(x_{d+1}+x_{d+2})\mathcal{C}_{d-1}(x_1, \ldots, x_{d+1}) + \cdots + 
(x_{n-1}+x_{n})\mathcal{C}_{d-1}(x_1, \ldots, x_{n-1})\\ 
&= x_{d}\mathcal{C}_{d-1}(x_1, \ldots, x_{d-1})+(x_{d+1}+x_{d+2})\mathcal{C}_{d-1}(x_1, \ldots, x_{d+1}) + \cdots + 
(x_{n-1}+x_{n})\mathcal{C}_{d-1}(x_1, \ldots, x_{n-1}).
\end{align*}
Then in the new basis where $y_1=x_1, \ldots, y_d=x_d, y_{d+1}=x_{d+1}, y_{d+2}=x_{d+1}+x_{d+2}, \ldots,
y_{n-1}=x_{n-1}, y_{n}=x_{n-1}+x_{n}$, we have
\begin{align*}
& \mathcal{C}_{d}(x_1, \ldots, x_{n}) 
   = \mathcal{C}_{d}(y_1, \ldots, y_{d}, y_{d+1}, y_{d+1}+y_{d+2}, \ldots, y_{n-1}, y_{n-1}+y_{n}) \\
\eqd& ~ y_{d}\mathcal{C}_{d-1}(y_1, \ldots, y_{d-1})+y_{d+2}\mathcal{C}_{d-1}(y_1, \ldots, y_{d}, y_{d+1}) + 
 y_{d+4}\mathcal{C}_{d-1}(y_1, \ldots, y_{d}, y_{d+1}, y_{d+1}+y_{d+2}, y_{d+3})  \\
&\quad + \cdots + 
  y_{n}\mathcal{C}_{d-1}(y_1, \ldots, y_{d}, y_{d+1}, y_{d+1}+y_{d+2}, y_{d+3}, y_{d+3}+y_{d+4}, \ldots, y_{n-3}, y_{n-3}+y_{n-2}, y_{n-1}).
\end{align*}
Hence $\{y_{d}, y_{d+2}, \ldots, y_{n}\}$ is a set of $k+1=\floor{\frac{n}{2}}-\frac{d}{2}+1$ linear
restrictions that reduce $\mathcal{C}_{d,n}$'s degree. By Theorem~\ref{thm:symlrank}, this is the best possible.

Similarly, when $n=d+2k+1$ is odd,
\begin{align*}
& \quad \mathcal{C}_{d}(x_1, \ldots, x_{n}) \\
&\eqd \mathcal{C}_{d}(x_1, \ldots, x_{n-2})+(x_{n-1}+x_{n})\mathcal{C}_{d-1}(x_1, \ldots, x_{n-1}) \\
&\eqd \cdots \cdots \\
&\eqd \mathcal{C}_{d}(x_1, \ldots, x_{d+1})+(x_{d+2}+x_{d+3})\mathcal{C}_{d-1}(x_1, \ldots, x_{d+2}) + \cdots + 
(x_{n-1}+x_{n})\mathcal{C}_{d-1}(x_1, \ldots, x_{n-1}) \\
&\eqd (x_{d}+x_{d+1})\mathcal{C}_{d-1}(x_1, \ldots, x_{d}) + \cdots + (x_{n-1}+x_{n})\mathcal{C}_{d-1}(x_1, \ldots, x_{n-1}). 
\end{align*}

Now if we switch to the basis in which $y_1=x_1, \ldots, y_d=x_d, y_{d+1}=x_{d}+x_{d+1}, \ldots,
y_{n-1}=x_{n-1}, y_{n}=x_{n-1}+x_{n}$, then
\begin{align*}
& \mathcal{C}_{d}(x_1, \ldots, x_{n}) 
   = \mathcal{C}_{d}(y_1, \ldots, y_{d}, y_{d}+y_{d+1}, \ldots, y_{n-1}, y_{n-1}+y_{n}) \\
\eqd& ~ y_{d+1}\mathcal{C}_{d-1}(y_1, \ldots, y_{d})+y_{d+3}\mathcal{C}_{d-1}(y_1, \ldots, y_{d}, y_{d}+y_{d+1}, y_{d+2}) + \cdots + \\
&\quad y_{n}\mathcal{C}_{d-1}(y_1, \ldots, y_{d}, y_{d}+y_{d+1}, \ldots, y_{n-3}, y_{n-3}+y_{n-2}, y_{n-1}).
\end{align*}
Consequently, $\{y_{d+1}, y_{d+3}, \ldots, y_{n}\}$ is a set of $k+1=\floor{\frac{n}{2}}-\frac{d}{2}+1$ linear
restrictions that reduce $\mathcal{C}_{d,n}$'s degree and meet the bound in Theorem~\ref{thm:symlrank}.


\section{Fourier spectra of GF(2) polynomials} \label{sec:fourier}

In this Section, we present a framework for computing the Fourier spectrum of a GF(2) polynomial based on its monomials.
We suspect that such a formalism was known before but we could not track any previous sources.

For a fixed $S\subseteq [n]$, a collection $\{S_1, \ldots, S_k\}$ of $k$ (distinct) subsets of $[n]$ 
form a \emph{$k$-cover} of $S$ if $\cup_{i=1}^{k}S_{i}=S$.
The main result of this section is the following lemma, which shows that the Fourier coefficients of
a GF(2) polynomial can be computed by counting the number of $k$-covers of subsets of $[n]$ 
--- for different values of $k$ ---
using the supports of monomials in the GF(2) polynomial as subsets. 
Of particular importance is the number of $k_\text{min}$-covers of $[n]$,
where $k_\text{min}$ is minimum number of subsets that are required to cover $[n]$.


For a family $\mathcal{F}=\{S_i\}_{i\in [m]}$ of subsets $S_i$ of the base set $[n]$ and an index set $M\subseteq [m]$, let $S_{M}\eqdef \cup_{k\in M} S_k$, the union of the subsets with indices in $M$. 



Let $f(x_1, \ldots, x_n) = \sum_{i=1}^m x_{S_i}$ be the GF(2) polynomial representation of $f$. Define a \emph{weight} function $w_f:\cube{n}\to \Q$ as
\begin{align}\label{eq:weight1}
w_f(T)=\sum_{\text{$M\subseteq [m]$: $S_M=T$}}c(M), \quad \text{ where } \ c(M)=\frac{(-2)^{|M|}}{2^{|S_M|}}.
\end{align}
Equivalently, if we denote $\mathcal{F}=\{S_i\}_{i\in [m]}$ and let $N_{k}(T)$ be the number of $k$-covers of $T$ using sets in $\mathcal{F}$, then
\begin{align}\label{eq:weight2}
w_f(T) = \frac{1}{2^{|T|}}\sum_{k=1}^{m}(-2)^{k}N_{k}(T).
\end{align} 
\begin{lemma} 
\label{lem:main}
Let $f(x_1, \ldots, x_n) = \sum_{i=1}^m x_{S_i}$ be a GF(2) polynomial, then the Fourier coefficients of $f^{\pm}$ are given by
\begin{align}\label{eq:Fourier_main}
\widehat{f^{\pm}}(S) = (-1)^{|S|} \sum_{T\supseteq S} w_f(T).
\end{align}
\end{lemma}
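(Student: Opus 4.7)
The plan is to derive the formula by starting from the sign representation of $f$ and expanding it twice: first as a sum over subsets $M$ of the monomial index set, then as a Fourier sum via the standard expansion of a monomial into characters.

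First I would write $f^{\pm}(x) = (-1)^{f(x)} = \prod_{i=1}^{m} (-1)^{x_{S_i}}$, which is valid because addition in the exponent corresponds to multiplication of signs. Since each $x_{S_i}\in\{0,1\}$, we have $(-1)^{x_{S_i}} = 1 - 2x_{S_i}$ as real numbers, so expanding the product gives
\[
f^{\pm}(x) \;=\; \prod_{i=1}^{m} (1 - 2x_{S_i}) \;=\; \sum_{M\subseteq [m]} (-2)^{|M|} \prod_{i\in M} x_{S_i}.
\]
Using that $x_j^2 = x_j$ in $\mbR$ for $x_j\in\{0,1\}$, the inner product collapses to $\prod_{i\in M} x_{S_i} = x_{S_M}$, where $S_M = \bigcup_{i\in M} S_i$. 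This is the first key step.

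Next I would insert the standard Fourier expansion of a monomial. From $x_j = (1-\chi_{\{j\}}(x))/2$ we obtain
\[
x_T \;=\; \frac{1}{2^{|T|}} \prod_{j\in T}\bigl(1 - \chi_{\{j\}}(x)\bigr) \;=\; \frac{1}{2^{|T|}} \sum_{S\subseteq T} (-1)^{|S|} \chi_S(x).
\]
Applying this with $T = S_M$ and substituting into the previous expression gives
\[
f^{\pm}(x) \;=\; \sum_{M\subseteq[m]} \frac{(-2)^{|M|}}{2^{|S_M|}} \sum_{S\subseteq S_M} (-1)^{|S|} \chi_S(x) \;=\; \sum_{M\subseteq[m]} c(M) \sum_{S\subseteq S_M} (-1)^{|S|} \chi_S(x).
\]

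The final step is a change of order of summation. Grouping by the resulting character index $S$ and then, for each $S$, grouping the contributing $M$'s by their union $T = S_M \supseteq S$, I read off the coefficient of $\chi_S(x)$ as
\[
\widehat{f^{\pm}}(S) \;=\; (-1)^{|S|} \sum_{M:\, S_M \supseteq S} c(M) \;=\; (-1)^{|S|} \sum_{T\supseteq S}\;\sum_{M:\, S_M = T} c(M) \;=\; (-1)^{|S|} \sum_{T\supseteq S} w_f(T),
\]
which is exactly \eqref{eq:Fourier_main}. The equivalence between the two definitions \eqref{eq:weight1} and \eqref{eq:weight2} of $w_f(T)$ follows immediately by binning the $M$'s with $S_M = T$ by their cardinality $|M|=k$, which gives $N_k(T)$ many such $M$'s, each contributing $(-2)^k/2^{|T|}$.

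I do not foresee a real obstacle: the argument is essentially bookkeeping, and the only step requiring care is ensuring that the idempotence collapse $\prod_{i\in M} x_{S_i} = x_{S_M}$ is applied before the Fourier expansion, so that the factor $2^{-|S_M|}$ (rather than $2^{-\sum_{i\in M}|S_i|}$) appears in $c(M)$. This is what makes $w_f$ naturally indexed by the union $T = S_M$ and lets the Fourier coefficient be written as a sum over supersets of $S$.
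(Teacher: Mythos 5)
Your proof is correct and follows essentially the same route as the paper: write $f^{\pm}$ as a product of $(1-2x_{S_i})$, expand over subsets $M\subseteq[m]$, collapse the product of monomials to the union $S_M$, expand each monomial into characters, and reorder the sums. The only cosmetic difference is that you perform the collapse via idempotence $x_j^2=x_j$ before passing to the $\pm 1$ variables, whereas the paper works in the $\tilde{x}$ variables throughout and collapses repeated factors via $(1-\tilde{x})^k = 2^{k-1}(1-\tilde{x})$; both yield the same factor $2^{-|S_M|}$ in $c(M)$.
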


\begin{proof}
For a Boolean variable $x_i \in \{0,1\}$, let $\tilde{x_i} = (-1)^{x_i} = 1-2x_i$ be its $\pmB$ representation, with the inverse transformation given by $x_i = (1-\tilde{x_i})/2$. Recall that $f^{\pm}=1-2f$. We next express $f^{\pm}$ as a multilinear polynomial over $\mbR$ from which its Fourier coefficients can be readily read out.


Note that $x_S$ corresponds to $1-2\prod_{i\in S}\frac{1-\tilde x_i}{2}$ and 
$\prod_{i\in S} \tilde{x}_{i}$ corresponds to $\tilde x_S$, thus 
\begin{align} \label{eqn:Fourier_expansion}
f^{\pm}(\tilde{x}_{1}, \ldots, \tilde{x}_{n}) = \prod_{i\in [m]} \Big(1-2\prod_{j\in S_i} \frac{1-\tilde x_{j}}{2}\Big)
\end{align}

\begin{fact}\label{fact:powering}
For $x\in \{-1, 1\}$ and integer $k\geq 1$, we have $(1-x)^{k}=2^{k-1}(1-x)$.
\end{fact}

By Eq.\eqref{eqn:Fourier_expansion}, the Fourier polynomial of $f^{\pm}$ 
in terms of $\tilde{x}$ is
\begin{align*}
  f^{\pm}(\tilde{x}) 
&=\prod_{i=1}^m \left(1-\frac{\prod_{j\in S_i} (1-\tilde{x}_{j})}{2^{|S_i|-1}} \right) \\
&=\sum_{k=0}^m (-1)^k \sum_{1\leq i_{1} < i_{2}< \ldots < i_k \leq m}
  \frac{\prod_{j_{1} \in S_{i_1}}(1-\tilde{x}_{j_1}) \prod_{j_{2} \in S_{i_2}}(1-\tilde{x}_{j_2})
  \cdots \prod_{j_{k} \in S_{i_k}}(1-\tilde{x}_{j_k})}{2^{|S_{i_1}|+|S_{i_2}|+\cdots+|S_{i_k}|-k}} \\ 
&=\sum_{k=0}^m (-1)^k \sum_{1\leq i_{1} < i_{2}< \ldots < i_k \leq m} 
   \frac{\prod_{j\in S_{i_1} \cup \cdots \cup S_{i_k}}(1-\tilde{x}_{j})}{2^{|S_{i_1} \cup \cdots 
     \cup S_{i_k}|-k}}	 \qquad (\text{by Fact~\ref{fact:powering}}) \\
&=\sum_{M\subseteq [m]} (-1)^{|M|} \frac{\prod_{j\in S_{M}}(1-\tilde{x}_{j})}{2^{|S_{M}|-|M|}} \\
&=\sum_{S\subseteq [n]} (-1)^{|S|}\left(\sum_{M\subseteq [m]: S_{M}\supseteq S} 
  (-1)^{|M|}\cdot \frac{2^{|M|}}{2^{|S_{M}|}}\right) \tilde{x}_{S},
\end{align*}
Since the coefficient of  $\tilde{x}_{S}$ in 
 $f^{\pm}(\tilde{x})$ is just the Fourier coefficient $\widehat{f^{\pm}}(S)$, this completes the proof of the lemma.
\end{proof}

The  weight function's value at $[n]$, $w_f([n])$, is the a particularly important term, as it contributes to all the Fourier coefficients of $f^{\pm}$. In particular, if the granularity of $w_f([n])$
is larger than the granularity of any other $w_f(T)$, then all Fourier coefficients of $f^{\pm}$ are non-zero. This will be used to lower bound granularity for different functions in the next two sections.

\section{Fourier sparsity of polynomials with complete $d$-uniform maxonomials} \label{sec:complete_sparsity}

This section is devoted to the proof of the following Fourier sparsity lower bound for polynomials whose
maxonomials are the complete $d$-uniform monomials.
\begin{theorem}\label{thm:symmetric_sparsity}
Let $d$ be a power of $2$. For any degree-$d$ polynomial $f\in \gf[x_1,\ldots,x_n]$ whose maxonomials include all $\binom{n}{d}$ degree-$d$ monomials, its Fourier sparsity has the following lower bound 
\[\spar(f)\geq 2^{d\cdot\lfloor n/d\rfloor}-1=\Omega(2^{n}),\]
regardless of the lower degree monomials. 
\end{theorem}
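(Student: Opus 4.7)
My approach follows the template sketched in the ``Techniques'' subsection: apply Lemma~\ref{lem:main} and argue that $w_f([n])$ has strictly higher granularity than every other $w_f(T)$. First I reduce to the case $n=kd$ by restricting $f$ to the affine subspace $\{x_{n'+1}=\cdots=x_n=0\}$ with $n' = d\lfloor n/d\rfloor$. The restricted polynomial on $[n']$ still has all $\binom{n'}{d}$ degree-$d$ monomials as its maxonomials, and by Lemma~\ref{lem:rotation} its Fourier sparsity is at most $\spar(f)$, so it suffices to prove $\spar\geq 2^{n'}-1$ for polynomials on $kd$ variables with complete $d$-uniform maxonomials.

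Assume now $n=kd$. Lemma~\ref{lem:main} gives $\widehat{f^\pm}(S) = (-1)^{|S|}\sum_{T\supseteq S} w_f(T)$ with $2^{|T|}w_f(T) = \sum_{j\geq 1}(-2)^j N_j(T)$. Since each monomial support has size $\leq d$, any cover of $[n]$ uses at least $k=n/d$ sets, and a $k$-cover is forced by counting to be a partition of $[n]$ into $k$ blocks of size exactly $d$, each block being the support of a maxonomial. Thus $N_j([n])=0$ for $j<k$ and $N_k([n]) = (kd)!/((d!)^k\, k!)$. Write $v_2$ for the $2$-adic valuation and $s_2(m)$ for the binary digit sum of $m$. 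By Legendre's formula $v_2(m!)=m-s_2(m)$, together with the identities $s_2(d)=1$ and $s_2(kd)=s_2(k)$ (both of which hold precisely because $d=2^a$), I compute
\[
v_2(N_k([n])) = k\, s_2(d) - s_2(kd) - k + s_2(k) = 0,
\]
so $N_k([n])$ is odd. This is the one place the power-of-two hypothesis is used and is, in my view, the main non-routine step. It follows that $v_2(2^n w_f([n]))=k$ exactly, and hence $\gran(w_f([n]))=n-k$.

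For $T\subsetneq[n]$ the same cover-size lower bound yields $v_2(2^{|T|}w_f(T))\geq \lceil |T|/d\rceil$, so $\gran(w_f(T))\leq |T|-\lceil|T|/d\rceil \leq |T|(1-1/d) < n(1-1/d) = n-k$. Therefore $w_f([n])$ is the unique strict maximizer of granularity in the sum defining $\widehat{f^\pm}(S)$, and the elementary fact that $\gran(x+y)=\gran(x)$ whenever $\gran(x)>\gran(y)$ (applied with $x=w_f([n])$ and $y$ equal to the sum of the remaining $w_f(T)$ with $S\subseteq T\subsetneq[n]$) forces $\widehat{f^\pm}(S)\neq 0$ for every $S$. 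Hence $\spar(f^\pm)=2^n$, and by Eq.~\eqref{eq:range switch} we obtain $\spar(f)\geq 2^n-1 = 2^{d\lfloor n/d\rfloor}-1$ after reverting the initial reduction.
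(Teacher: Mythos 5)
Your proposal is correct and follows essentially the same route as the paper: restrict to $n'=kd$ variables, apply Lemma~\ref{lem:main}, show $N_k([n'])$ is odd so that $w_f([n'])$ has granularity exactly $n'-k$, bound all other weights strictly below via $|T|-\lceil|T|/d\rceil$, and conclude every Fourier coefficient of $f^\pm$ is nonzero. The only (cosmetic) difference is that you establish the oddness of $N_k([n'])=(kd)!/((d!)^k k!)$ directly via Legendre's formula $v_2(m!)=m-s_2(m)$, whereas the paper factors it as $\prod_{j=1}^{k}\binom{jd-1}{d-1}$ and invokes Kummer's/Lucas' theorem; both computations are valid and use the power-of-two hypothesis in the same way.
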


\begin{remark}
In the rest of this section, we fix $k=\lfloor n/d \rfloor$. 
\end{remark}

First we apply a restriction to set, say the last $n-kd$ variables in $f$ to zero.
This leaves us with a function $g$ on $n'=kd$ variables, and by Item 2 of Lemma~\ref{lem:rotation}, 
$\spar(f)\geq \spar(g)$. 
Furthermore, the maxonomials of $g$ are still complete $d$-uniform monomials (now over $n'$ variables). 

Let $\mathcal{F}$ be the set of the supports of all monomials in $g$. In particular,
$\mathcal{F}$ contains all $d$-subsets of $[n']$: $\binom{[n']}{d}\subseteq \mathcal{F}$.
 
\begin{lemma}\label{lem:complete_symmetric_weight1}
The granularity of the  weight function at $[n']$ (hence the Fourier coefficient of $g$ at $[n']$) 
is $\gran\left(w_g([n'])\right)= n'-k$.
\end{lemma}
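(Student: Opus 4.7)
The plan is to evaluate the weight function $w_g([n'])$ using the cover formulation in Eq.~\eqref{eq:weight2}, and to show that the granularity lower bound $n'-k$ is realized because the leading (smallest‐index) term already has the largest possible power of $2$ in the denominator.

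First I would pin down the minimum cover size. Every set in $\mathcal{F}$ has cardinality at most $d$, so covering $[n']$ requires at least $\lceil n'/d\rceil = k$ sets. Moreover, a cover of $[n']$ consisting of exactly $k$ sets must consist of $k$ pairwise disjoint sets of size exactly $d$, i.e.\ a partition of $[n']$ into $k$ blocks of size $d$, and every such partition lies in $\mathcal{F}$ because $\binom{[n']}{d}\subseteq \mathcal{F}$. Hence $N_{k'}([n']) = 0$ for $k' < k$, and
\[
N_k([n']) \;=\; \frac{(kd)!}{(d!)^{k}\, k!}.
\]

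Next I would isolate the $2$-adic behavior. By Eq.~\eqref{eq:weight2},
\begin{align*}
w_g([n'])
&= \frac{1}{2^{n'}}\sum_{k'\geq k}(-2)^{k'} N_{k'}([n']) \\
&= \frac{(-1)^k}{2^{n'-k}}\Bigl[\,N_k([n']) - 2\,N_{k+1}([n']) + 4\,N_{k+2}([n']) - \cdots\,\Bigr].
\end{align*}
The quantity in brackets is an integer congruent to $N_k([n'])\pmod{2}$, so the granularity of $w_g([n'])$ equals $n'-k$ if and only if $N_k([n'])$ is odd (and otherwise is strictly smaller). In particular $\gran(w_g([n']))\leq n'-k$ always, and to get equality it suffices to check that $N_k([n'])$ is odd.

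The main obstacle — and the step where the hypothesis ``$d$ is a power of $2$'' is used — is checking that $N_k([n'])=(kd)!/((d!)^k k!)$ is odd. I would apply Legendre's formula $\nu_2(n!) = n - s_2(n)$, where $s_2(n)$ is the binary digit sum of $n$, to get
\[
\nu_2\!\left(\frac{(kd)!}{(d!)^k k!}\right) = \bigl(kd - s_2(kd)\bigr) - k\bigl(d - s_2(d)\bigr) - \bigl(k - s_2(k)\bigr) = k\cdot s_2(d) + s_2(k) - s_2(kd) - k.
\]
When $d = 2^t$ one has $s_2(d)=1$ and $s_2(kd) = s_2(k\cdot 2^t) = s_2(k)$, so the right-hand side vanishes and $N_k([n'])$ is odd. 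Combined with the preceding paragraph, this gives $\gran(w_g([n'])) = n'-k$, completing the proof.
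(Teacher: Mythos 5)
Your proposal is correct and follows essentially the same route as the paper: identify $k=n'/d$ as the minimum cover size, observe that the $k$-covers are exactly the partitions of $[n']$ into $d$-blocks, reduce the granularity claim to the oddness of $N_k([n'])$, and use the hypothesis that $d$ is a power of $2$ to verify that oddness. The only difference is the final arithmetic step: the paper telescopes the multinomial coefficient into $\prod_{j=1}^{k}\binom{jd-1}{d-1}$ and applies Kummer's theorem, whereas you apply Legendre's formula $\nu_2(n!)=n-s_2(n)$ directly to $(kd)!/((d!)^k k!)$ --- both are valid and of comparable length.
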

\begin{proof}
By Lemma~\ref{lem:main}, as the degrees of all monomials in $g$ are at most $d$, 
the minimum number of subsets required from $\mathcal{F}$ to cover $[n']$ is $k$, therefore 
\begin{align}\label{eq:symmetric_weight}
\widehat{g^{\pm}}([n'])=(-1)^{n'}w_g([n'])=(-1)^{n'}\sum_{j=k}^{m}\frac{(-2)^{j} N_{j}([n'])}{2^{n'}}.
\end{align}
\begin{claim}
$ N_{k}([n']) \equiv 1 \pmod{2}$.
\end{claim}
\begin{proof}
Clearly any $k$-cover of $[n']$ consists of $k$ distinct sets in $\binom{[n']}{d}$, 
and there are exactly $\frac{\binom{n'}{d,\ldots, d}}{k!}$
such $k$-covers. Hence we have
\begin{align*}
 N_{k}([n']) &= \frac{\binom{n'}{d,\ldots, d}}{k!} = \frac{1}{k}\binom{kd}{d} \cdot \frac{1}{k-1}\binom{(k-1)d}{d} \cdots 1\cdot \binom{d}{d} \\
 &=\binom{kd-1}{d-1} \cdot \binom{(k-1)d-1}{d-1} \cdots \binom{d-1}{d-1}.
\end{align*}

Recall the following Lucas' theorem: 
\begin{theorem}[Lucas' theorem, c.f.~\cite{Fin47}]
Let $s$ and $t$ be non-negative integers and $p$ be a prime. 
Let $s=s_0+s_1 p +\cdots s_i p^i$ and $t=t_0+t_1 p +\cdots t_i p^i$, $0\leq s_j, t_j <p$,
be the base-$p$ expansions of $s$ and $t$ respectively,
then
\[
\binom{s}{t} \equiv \prod_{j=0}^{i}\binom{s_j}{t_j} \pmod{p}.
\]
\end{theorem}
In fact, what we need is the a simple corollary of Lucas' theorem (known as Kummer's theorem) for the special case of $p=2$: 
the largest integer $j$ such that $2^j$ divides $\binom{s}{t}$
is equal to the number of carries that occur when $s$ and $s-t$ are added in the binary.
 
Since $d$ is a power of $2$, the binary representation of $d-1$ is $\underbrace{1\cdots 1}_{\log{d}}$ and
the binary representation of $jd-1-(d-1)=(j-1)d$ is $\cdots \underbrace{0\cdots 0}_{\log{d}}$, for every $j\geq 1$.
Therefore no carry occurs when adding $(j-1)d$ to $d-1$ and 
thus, by Kummer's theorem, $\binom{jd-1}{d-1}\equiv 1\pmod{2}$ for all $j\geq 1$. 
It follows that $ N_{k}([n']) \equiv 1 \pmod{2}$.
\end{proof}

Finally note that the granularity of the  $(j-k+1)^{\text{st}}$ term in Eq.~\eqref{eq:symmetric_weight} satisfies
\[
\gran\left(\frac{(-2)^{j} N_{j}([n'])}{2^{n'}}\right) \leq n'-j < n'-k,
\]
for all $j>k$, therefore the first term is the unique term in the sum which has the highest granularity $n'-k$.
Hence its granularity is also the granularity of the sum in Eq.~\eqref{eq:symmetric_weight}.
This completes the proof of Lemma~\ref{lem:complete_symmetric_weight1}.
\end{proof}

Now we need the following simple observations, which are simple consequences of Lemma~\ref{lem:main}.
\begin{fact}\label{fact:max_gran_w}
Let $g:\cube{n'}\to \B$ be a degree-$d$ polynomial. Then for any $T\subseteq [n']$,
the granularity of the  weight function of $g$ at $T$ $\gran(w_{g}(T))$ is at most $|T|-\lceil |T|/d \rceil$. 
\end{fact}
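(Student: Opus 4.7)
My plan is to use the weight-function formula from Eq.~\eqref{eq:weight2} directly and pair it with the simple observation that $\gran$ is subadditive under sums (Fact analogous to the one noted for $\gran(x+y) \leq \max(\gran(x),\gran(y))$). The entire statement should fall out after identifying the smallest value of $k$ for which $N_k(T)$ can be nonzero.

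First I would rewrite the definition as
\[
w_g(T) \;=\; \sum_{k=1}^{m} \frac{(-2)^{k} N_k(T)}{2^{|T|}} \;=\; \sum_{k \geq 1} \frac{(-1)^{k} N_k(T)}{2^{|T|-k}}.
\]
Then I would observe that since $g$ has degree $d$, every set $S_i \in \mathcal{F}$ satisfies $|S_i| \leq d$, so a $k$-cover of $T$ can have total size at most $kd$. Hence $N_k(T) = 0$ whenever $kd < |T|$, i.e. whenever $k < \lceil |T|/d \rceil$. Setting $k_{\min} = \lceil |T|/d \rceil$, the sum effectively begins at $k = k_{\min}$.

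Next I would note that each individual summand $(-1)^k N_k(T)/2^{|T|-k}$ is an integer divided by $2^{|T|-k}$ (or an integer times a power of $2$, once $k \geq |T|$), so its granularity is at most $|T| - k$. Applying the repeated version of $\gran(a+b) \leq \max(\gran(a), \gran(b))$ across the (finitely many) nonzero terms yields
\[
\gran(w_g(T)) \;\leq\; \max_{k \geq k_{\min}} (|T| - k) \;=\; |T| - k_{\min} \;=\; |T| - \lceil |T|/d \rceil,
\]
which is exactly the claim. I do not anticipate any real obstacle: the fact is essentially a mechanical consequence of the combinatorial lower bound $k_{\min} \geq \lceil |T|/d \rceil$ and the elementary behavior of granularity under addition, both of which are already in place in Section~\ref{sec:fourier} and the preliminaries.
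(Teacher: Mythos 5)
Your proof is correct and follows exactly the same route as the paper's (one-sentence) argument: the paper also derives the bound directly from Eq.~\eqref{eq:weight2} by noting that $|S_i|\leq d$ forces $N_k(T)=0$ for $k<\lceil |T|/d\rceil$, so every surviving term has granularity at most $|T|-k\leq |T|-\lceil |T|/d\rceil$. You have merely written out the details the paper leaves implicit.
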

\begin{proof}
This follows directly from Eq.~\eqref{eq:weight2}: since every subset in $\mathcal{F}$ is of size at most $d$,
the minimum number of sets to cover $T$ is $\lceil |T|/d \rceil$.
\end{proof}

As a simple corollary of Fact~\ref{fact:max_gran_w}, we have
\begin{corollary}\label{cor:complete_symmetric_weight2}
Let $g:\cube{n'}\to \B$ be a degree $d$ polynomial.
Then for any $T\subseteq [n']$, 
$\gran(w_{g}(T))\leq n'-\lceil n'/d \rceil$,
and equality is only possible for $T=[n']$.
\end{corollary}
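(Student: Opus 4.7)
The plan is to derive this as a direct arithmetic consequence of Fact~\ref{fact:max_gran_w}. That fact already yields $\gran(w_g(T)) \le |T| - \lceil |T|/d \rceil$ for every $T \subseteq [n']$, so the work reduces to showing that the integer-valued function $h(m) \defeq m - \lceil m/d \rceil$ on $\{0,1,\ldots,n'\}$ attains its maximum only at $m = n'$.

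First I would compute the one-step increment
\[
h(m+1) - h(m) \;=\; 1 - \bigl(\lceil (m+1)/d \rceil - \lceil m/d \rceil\bigr),
\]
and observe that the bracketed quantity equals $1$ precisely when $d \mid m$ and equals $0$ otherwise. Consequently $h$ is non-decreasing in $m$, and it increases strictly except in the step immediately following a multiple of $d$. In the setting relevant to Theorem~\ref{thm:symmetric_sparsity} we have $n' = kd$, so $n' - 1$ is not a multiple of $d$ (since $d \ge 2$), which forces $h(n') > h(n'-1) \ge h(|T|)$ for every proper subset $T \subsetneq [n']$.

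Combining with Fact~\ref{fact:max_gran_w} then gives $\gran(w_g(T)) \le h(|T|) < h(n') = n' - \lceil n'/d \rceil$ whenever $|T| < n'$, while the $T = [n']$ case is exactly the claimed bound. I do not anticipate any serious obstacle here; the statement is an elementary discrete-calculus consequence of the already-established bound, the only delicate point being the behavior of $h$ at multiples of $d$, which is precisely what the divisibility hypothesis $d \mid n'$ in the intended application side-steps.
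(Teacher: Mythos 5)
Your proof is correct and matches the paper's (essentially unstated) argument: the paper simply asserts the corollary follows from Fact~\ref{fact:max_gran_w}, and your discrete-calculus analysis of $m \mapsto m - \lceil m/d \rceil$ supplies exactly the missing arithmetic, including the correct observation that the ``equality only at $T=[n']$'' clause relies on $d \mid n'$ (which holds in context since $n' = kd$).
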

In other words, if the granularity of $w_{g}([n'])$ is indeed equal to $n'-\lceil n'/d\rceil$, then
that is the unique highest granularity among all  weight values.

Now applying Proposition \ref{prop:sparsity_granularity} gives $\spar(g) \ge 2^{n'-k}$. To get the stronger lower bound $2^{n'}$ as claimed, 
let us combine Lemma~\ref{lem:complete_symmetric_weight1}, Corollary~\ref{cor:complete_symmetric_weight2} 
and Eq.~\eqref{eq:weight2} in Lemma~\ref{lem:main}, and observe that not only $w_g([n'])$ has the unique highest granularity among all  
weights $\{w_g(S)\}_{S\subseteq [n']}$,
but also it is included in the Fourier coefficient of $\widehat{g^{\pm}}(S)$ for every $S\subseteq [n']$.
We therefore 
see that for all $S\subseteq [n']$, $\gran\left(\widehat{g^{\pm}}(S)\right)=n'-k>0$; consequently
$\spar\left(g^{\pm}\right)=2^{n'}$. 
It follows that 
\[\spar(g) \ge \spar(g^{\pm}) - 1 = 2^{n'} - 1 = 2^{d\lfloor n/d\rfloor} - 1,\]
completing the proof of Theorem~\ref{thm:symmetric_sparsity}.


\section{Fourier sparsity for functions with sparse maxonomials} \label{sec:sparse_sparsity}
In the previous two sections, we see cases that when all $\binom{n}{d}$ monomials of the highest degree appear, then the function has large Fourier sparsity, no matter what other lower-degree monomials exist or not. 
In this section, we will consider the other end of the spectrum when there are only a small number of the maxonomials, and show that the same phenomena can occur in this case as well. 

The first example is the class of functions with disjoint maxonomials. 

\begin{proposition}
Suppose that $f:\BntB$ has $\deg_2(f) = d$ where $d|n$. If there are exactly $n/d$ monomials of degree $d$, 
and their supports are pairwise disjoint, then $\spar(f) \ge 2^{n}-1$, regardless of the lower degree monomials. 
\end{proposition}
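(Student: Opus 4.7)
The plan is to replicate the argument used for Theorem~\ref{thm:symmetric_sparsity}: I will show that among the weights $\{w_f(T)\}_{T\subseteq [n]}$ of Lemma~\ref{lem:main}, the weight $w_f([n])$ strictly dominates in granularity all others, so that every sum $\sum_{T\supseteq S} w_f(T) = (-1)^{|S|}\widehat{f^{\pm}}(S)$ inherits this top granularity and is therefore nonzero. Let $k=n/d$ and let $S_1,\ldots,S_k$ denote the pairwise disjoint, size-$d$ maxonomial supports, which by hypothesis partition $[n]$; I assume $d\geq 2$ (the $d=1$ case being degenerate, since then $f$ is affine).

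First I would compute $N_k([n])$ exactly. Any $k$-cover of $[n]$ drawn from the family $\mathcal{F}$ of monomial supports of $f$ consists of $k$ sets of size at most $d$ whose union has size $n=kd$, which forces each chosen set to have size exactly $d$, i.e., to be one of the maxonomial supports $S_i$. Since the $S_i$ already partition $[n]$, the unique such cover is $\{S_1,\ldots,S_k\}$, so $N_k([n])=1$; clearly also $N_j([n])=0$ for $j<k$. Plugging this into Eq.~\eqref{eq:weight2}, the $j=k$ contribution to $w_f([n])$ equals $(-1)^k/2^{n-k}$ with granularity $n-k$, while every $j>k$ term has granularity at most $n-j<n-k$. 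Hence $\gran(w_f([n]))=n-k$.

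For $T\subsetneq [n]$, Fact~\ref{fact:max_gran_w} bounds $\gran(w_f(T))\leq |T|-\lceil |T|/d\rceil$. Since $n=kd$, this bound is maximized over $T\subsetneq [n]$ at $|T|=n-1$, where $\lceil (n-1)/d\rceil=k$, giving value $n-k-1$. Thus every $w_f(T)$ with $T\subsetneq [n]$ has granularity at most $n-k-1<n-k$, so $w_f([n])$ is the \emph{unique} weight of top granularity $n-k$. By the elementary rule that a sum of rationals in which exactly one summand attains the maximum granularity has that same granularity, we obtain $\gran\bigl(\widehat{f^{\pm}}(S)\bigr)=n-k>0$ for every $S\subseteq [n]$. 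Hence $\spar(f^{\pm})=2^n$, and Eq.~\eqref{eq:range switch} yields $\spar(f)\geq 2^n-1$. The only step with any combinatorial content is the identity $N_k([n])=1$, which is essentially immediate from the disjointness of the maxonomial supports; everything else is a direct application of the Section~\ref{sec:fourier} machinery.
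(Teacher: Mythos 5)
Your proof is correct and takes essentially the same route as the paper's: both identify the unique minimum cover of $[n]$ given by the $k=n/d$ pairwise disjoint maxonomial supports, deduce that $\gran\bigl(w_f([n])\bigr)=n-k$ strictly dominates all other weights, and then reuse the granularity-domination argument from Theorem~\ref{thm:symmetric_sparsity} to conclude that all $2^n$ Fourier coefficients of $f^{\pm}$ are nonzero. You merely spell out the details (the computation $N_k([n])=1$ and the $d\geq 2$ caveat) that the paper compresses into ``a similar argument as the last part of the proof for Theorem~\ref{thm:symmetric_sparsity}.''
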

\begin{proof}
We apply Lemma~\ref{lem:main} to $f$ and note that the smallest number of sets needed to cover $[n]$ is $n/d$, 
achieved by the maxonomials. Thus the Fourier coefficient $\widehat{f^\pm}([n])$ equals $ \pm \frac{1}{2^{n-n/d}}$ 
plus some fractions with denominator $2^{k}$ for some $k<n-n/d$. 
Therefore $\gran(\widehat{f^\pm}([n])) = n-n/d$. 
Now using a similar argument as the last part of the proof for Theorem \ref{thm:symmetric_sparsity}, 
we see that all Fourier coefficients of $f^\pm$ are non-zero. Thus $\spar(f) \ge \spar(f^\pm)-1 \ge 2^{n}-1.$ 
\end{proof}

The second example extends the first class by allowing ``regular'' overlaps between maxonomials. Assume that $\deg_2(f) = d$ is an odd prime power, and $d^2|n$. Divide $[n]$ into $n/d^2$ piles of equal size, with each pile identified with a $d\times d$ grid. All maxonomials are linear functions in a pile. More precisely, for the first pile $[d]\times [d]$, for each pair $(a,b) \in \mathbb{F}_d^2$, define univariate polynomial $p_{a,b}\in \mathbb{F_d}[x]$ by $p_{a,b}(x) = ax+b$. Now define sets 
\[S_{a,b} = \{(0,p(0)), (1,p(1)), \ldots, (d-1,p(d-1))\}.\]
The first pile thus has $d^2$ sets inside. Similarly define $d^2$ sets for each other pile. 
These sets are supports of the maxomonials. Note that there are $d^2 \cdot n/d^2 = n$ maxonomials, 
a number much smaller than the possible number of lower degree monomials, which is $\sum_{i=0}^{d-1} \binom{n}{i}$. 
Yet the next theorem says that the this small number of maxonomials determines a large Fourier sparsity, 
regardless of how the vast majority of other (lower-degree) terms behave. 

\begin{theorem}
For any function $f:\BntB$ with the maxonomials defined as above, $\spar(f) \ge 2^{n}-1$, regardless of the lower degree monomials.
\end{theorem}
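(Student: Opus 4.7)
The plan is to mirror the template of the proof of Theorem~\ref{thm:symmetric_sparsity}. Apply Lemma~\ref{lem:main} to the family $\mathcal{F}$ consisting of the supports of all monomials of $f$. I will argue that $w_f([n])$ has strictly larger granularity than every other $w_f(T)$; then by Eq.~\eqref{eq:Fourier_main} the value $w_f([n])$ is the uniquely most-granular summand in every $\widehat{f^\pm}(S)$, so every Fourier coefficient of $f^\pm$ is nonzero. This gives $\spar(f^\pm)=2^n$, and hence $\spar(f)\geq 2^n-1$ by Eq.~\eqref{eq:range switch}.

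The key combinatorial step is to compute $N_k([n])$ for the minimum feasible $k$. Since every set in $\mathcal{F}$ has size at most $d$ and $d\mid n$, one needs $k\geq n/d$ sets to cover $[n]$, and the maxonomials themselves supply a cover of size $n/d$ (choose any parallel class $\{S_{a,b}:b\in\mathbb{F}_d\}$ in each of the $n/d^2$ piles). Hence the minimum is $k=n/d$, and any $(n/d)$-cover is forced to be a partition of $[n]$ into $n/d$ sets of size exactly $d$; in particular each part is a maxonomial and each pile is partitioned into $d$ affine lines $S_{a,b}$. I would then show that within a single pile the only such partitions are the $d$ parallel classes: two lines $S_{a_1,b_1},S_{a_2,b_2}$ with $a_1\neq a_2$ meet at the unique point whose $x$-coordinate is $(b_2-b_1)/(a_1-a_2)\in\mathbb{F}_d$, so they cannot coexist in a disjoint partition, forcing all $d$ lines to share a common slope. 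This yields $N_{n/d}([n])=d^{n/d^2}$, which is odd because $d$ is an odd prime power.

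By Eq.~\eqref{eq:weight2} the leading term of $w_f([n])$ is $(-2)^{n/d}d^{n/d^2}/2^n$, so $\gran(w_f([n]))=n-n/d$. By Fact~\ref{fact:max_gran_w} and Corollary~\ref{cor:complete_symmetric_weight2}, every $w_f(T)$ with $T\subsetneq[n]$ satisfies $\gran(w_f(T))\leq |T|-\lceil|T|/d\rceil<n-n/d$ (the strict inequality using $d\mid n$). Therefore $w_f([n])$ dominates the granularity of every sum $\sum_{T\supseteq S}w_f(T)$ in Eq.~\eqref{eq:Fourier_main}, so every $\widehat{f^\pm}(S)\neq 0$, as desired. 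The main obstacle is the partition-enumeration step, whose two crucial ingredients---that two non-parallel affine lines meet in a single point, requiring $\mathbb{F}_d$ to be a field (hence $d$ a prime power), and that $d^{n/d^2}$ is odd, requiring $d$ odd---genuinely use the full hypothesis on $d$.
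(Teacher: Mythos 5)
Your proposal is correct and follows essentially the same route as the paper's proof: the minimum cover size is $n/d$, every minimum cover partitions each pile into a parallel class of affine lines (since two non-parallel lines over the field $\mathbb{F}_d$ intersect), giving $N_{n/d}([n])=d^{n/d^2}$, which is odd, and then the granularity argument from Theorem~\ref{thm:symmetric_sparsity} finishes the job. The only differences are cosmetic (e.g., you state the intersection point as $(b_2-b_1)/(a_1-a_2)$, which is in fact the corrected form of the paper's expression).
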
	
\begin{proof}
Clearly the set $[n]$ can be partitioned using supports of $n/d$ maxonomials. 
We will show that the number of such partitions is $d^{n/d^2}$, which is an odd number given that $d$ is odd. 
	
Since the piles are disjoint and all maxonomials are defined within each pile, 
it suffices to show that there are $d$ ways of partitioning each pile into maxonomials. 
We consider the first pile and the same argument applies to others. 
Note that for each fixed $a$, if we vary $b$ over $\mathbb{F}_d$, 
then we get $d$ maxonomials that are pairwise disjoint. 
Since there are $d$ different choices of $a$, there are at least these $d$ ways to partition the pile into $d$ maxonomials. 
We next show that there are actually no other partition of the pile using $d$ maxonomials. 
Indeed, assume that a partition uses $d$ maxonomials and not all these maxonomials have the same $a$, 
then there are two maxonomials corresponding to $a_1 x + b_1$ and $a_2 x + b_2$ and $a_1 \ne a_2$. 
But now these two ``lines'' intersect at exactly one point $x = (a_1 - a_2)^{-1}(b_1 - b_2)$, 
where the existence of $(a_1 - a_2)^{-1}$ uses the assumption that $a_1 \ne a_2$. 
Note the trivial fact that the union of $d$ maxonomials of degree $d$ is at most $d^2$, and it is $d^2$ only if they are pairwise disjoint. 
So the existence of intersecting maxonomials in the selected $d$ maxonomials make them impossible to cover the $d^2$ points in the pile. 
This shows that the number of  partitions of one pile using $d$ maxonomials is exactly $d$, 
and thus the number of covers of $[n]$ using $n/d^2$ maxonomials is $d^{n/d^2}$. 
Now apply a similar argument as the last part of the proof for Theorem \ref{thm:symmetric_sparsity}, 
we see that $\spar(f) \ge \spar(f^\pm)-1 \ge 2^{n}-1$.
\end{proof}	


It would be nice to also pin down the linear rank of the functions with the maxonomials defined as above. What we are able to say at this moment is an upper bound only.
\[\lrank(f) \le n/d.\]
Indeed, for each pile, we can pick the first column of variables and set them all to 0. This makes all maxonomials vanish, and thus decreases the degree by at least 1. 

\subsection{Granularity upper bound for low-degree polynomials}
Note that there is a gap of factor $2$ in characterizing the logarithm of Fourier sparsity of a Boolean function
by means of its granularity (cf. Proposition~\ref{prop:sparsity_granularity}).
Note that both lower and upper bounds in Proposition~\ref{prop:sparsity_granularity} are tight,
but one is attained by the AND function (a degree-$n$ polynomial) and the other by any bent function, e.g. 
the Inner Product function (a degree-$2$ polynomial).
It thus natural to conjecture that, for any low-degree polynomial $f(x)$, 
although $\spar(f)$ can be as large as $2^n$, the granularity of $f(x)$ is always bounded away from $n$.
We now apply our technique developed in Section~\ref{sec:fourier} to prove the following upper bound 
for  the granularity of low-degree polynomials.
\begin{theorem}\label{thm:granularity_degree}
For any Boolean function $f: \BntB$, if $d = \deg_2(f)$ is the $\gf$-degree of $f$,
then 
$\gran(f^{\pm}) \leq n-\lceil \frac{n}{d} \rceil$, and 
consequently, $\gran(f) \leq n-\lceil \frac{n}{d} \rceil+1$. 
\end{theorem}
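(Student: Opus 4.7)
The plan is to use the machinery developed in Section~\ref{sec:fourier}, specifically Lemma~\ref{lem:main} and Fact~\ref{fact:max_gran_w}, to directly bound the granularity of every Fourier coefficient of $f^{\pm}$ by the granularity of the individual weight-function values contributing to it.

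First I would apply Lemma~\ref{lem:main} to write
\[
\widehat{f^{\pm}}(S) \;=\; (-1)^{|S|}\sum_{T\supseteq S} w_f(T),
\]
so that the granularity bound $\gran(\widehat{f^{\pm}}(S)) \le \max_{T\supseteq S}\gran(w_f(T))$ follows from the elementary fact that $\gran(\sum x_i)\le \max_i \gran(x_i)$ recorded just before Fact~\ref{fact:granularity_XOR}. By Fact~\ref{fact:max_gran_w}, every weight value satisfies $\gran(w_f(T)) \le |T|-\lceil |T|/d\rceil$, so it suffices to upper bound this quantity over all $T\subseteq [n]$ containing $S$.

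Next I would observe that the function $t\mapsto t-\lceil t/d\rceil$ is non-decreasing in $t\in \{0,1,\dots,n\}$, because incrementing $t$ by one raises $\lceil t/d\rceil$ by at most one. Hence the maximum of $|T|-\lceil |T|/d\rceil$ over $T\subseteq [n]$ is attained at $T=[n]$, giving
\[
\gran(\widehat{f^{\pm}}(S)) \;\le\; n-\lceil n/d\rceil
\]
for every $S\subseteq [n]$. Taking the max over $S$ yields $\gran(f^{\pm}) \le n-\lceil n/d\rceil$.

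Finally, I would pass from $f^{\pm}$ to $f$ via $\hat f(\alpha) = \tfrac12(\delta_{\alpha,0^n}-\widehat{f^{\pm}}(\alpha))$: dividing by two raises granularity by at most one, so $\gran(f)\le \gran(f^{\pm})+1\le n-\lceil n/d\rceil+1$, as claimed. There is no real obstacle here: once Lemma~\ref{lem:main} and Fact~\ref{fact:max_gran_w} are in hand, the argument is purely book-keeping on granularity, and the only thing to check carefully is the monotonicity of $t-\lceil t/d\rceil$ so that the worst case indeed occurs at $T=[n]$.
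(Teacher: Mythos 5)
Your proof is correct, and it takes a genuinely different (and somewhat more economical) route than the paper's. The paper isolates a single extremal coefficient $\widehat{f^{\pm}}(T)$ attaining $\gran(f^{\pm})$, rules out $T=\emptyset$ by a Parseval argument, and then applies an invertible linear map sending $T$ to $[n]$ so that, by Lemma~\ref{lem:main}, the coefficient at $[n]$ equals the \emph{single} weight value $(-1)^n w_g([n])$, whose granularity is bounded by the minimum-cover-size count. You instead bound \emph{every} coefficient at once: from $\widehat{f^{\pm}}(S)=(-1)^{|S|}\sum_{T\supseteq S}w_f(T)$, the subadditivity $\gran(\sum_i x_i)\le\max_i\gran(x_i)$ reduces the problem to Fact~\ref{fact:max_gran_w}, and the monotonicity of $t\mapsto t-\lceil t/d\rceil$ (which you correctly verify: incrementing $t$ raises $\lceil t/d\rceil$ by at most one) shows the worst case is $|T|=n$. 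This avoids both the change of basis and the Parseval step, at the cost of invoking Fact~\ref{fact:max_gran_w} for all $T$ rather than the sharper single-term identity at $[n]$; the paper's rotate-to-$[n]$ device is the same one it needs elsewhere (in the sparsity lower bounds, where the \emph{exact} granularity of $w_g([n])$ matters, not just an upper bound), which is presumably why it is reused here, but for a pure upper bound your argument is cleaner. The final passage from $f^{\pm}$ to $f$ matches the paper's.
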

\begin{proof}
Suppose $\widehat{f^{\pm}}(T)$ achieves $\gran(f^{\pm})$, i.e., 
$\widehat{f^{\pm}}(T) = c/2^{\scriptsize \gran(f^{\pm})}$ for some odd integer $c$. 
Without loss of generality, we may assume that $T \neq \emptyset$. 
Actually, if $\widehat{f^{\pm}}(0)$ is the single Fourier coefficient that achieves $\gran(f^{\pm})$, 
then the sum of the squares of all Fourier coefficients of 
$f^{\pm}$ would be a rational number with granularity $2\gran(f^{\pm})$ instead of $1$,
contradicting Parseval's identity.
	
Now we apply an invertible linear map $L$ such that $(L^T)^{-1}(T)=[n]$.
Denote $f\circ L$ by $g$. By Fact~\ref{fact:invertible1}, $g$ is also a polynomial of degree $d$.
Moreover, by Fact~\ref{fact:invertible2}, 
we have that $\widehat{g^{\pm}}([n]) = \widehat{f^{\pm}}(T)$. 

Now suppose $g(x) = \sum_{i=1}^m \prod_{j \in S_{i}}x_{j}$,
where $|S_j|\leq d$ for every $1\leq j \leq m$.
Applying Lemma~\ref{lem:main} and notice that, 
since $|S_j|\leq d$, the minimum number $k$ such that there exists a collection of $k$ subsets 
from $\{S_j\}_{j\in [m]}$ that cover $[n]$ is $k=\lceil \frac{n}{d} \rceil$.
Therefore, by Eq.~\eqref{eq:Fourier_main},
\[
\widehat{g^{\pm}}([n]) = (-1)^{n} w_g([n]) = (-1){^n} \sum_{j=k}^{m}\frac{(-2)^{j} N_{j}([n])}{2^n}.
\]
Note that the granularity of the $j^{\text{th}}$ term in the above summation is at most $n-j$ (we only have inequality
here as $N_{j}([n])$ may be an even number), and the granularity of a sum of rational numbers is at most the
maximum granularity in the summands:
\[
\gran\left(\sum_{j=1}^{\ell}y_{j}\right) \leq \max_{1\leq j\leq \ell}\gran(y_{j}),
\] 
where $y_{j} \in \Q$ for $1\leq j \leq \ell$, 
we therefore have $\gran\left(\widehat{g^{\pm}}([n])\right) \leq n-k = n-\lceil \frac{n}{d} \rceil$. 
This finally gives
\[
\gran(f^{\pm}) = \gran\left(\widehat{f^{\pm}}(T)\right) = \gran\left(\widehat{g^{\pm}}([n])\right) 
\leq n - \lceil \frac{n}{d} \rceil.
\]
The upper bound of the granularity of $f$ follows from the easy fact that $\gran(f)\leq \gran(f^{\pm})+1$.
\end{proof}

\section*{Acknowledgements}
We are indebted to the anonymous reviewers for their detailed helpful comments.

\newcommand{\etalchar}[1]{$^{#1}$}


\end{document}